\newcommand{\beqn}{\begin{eqnarray}}
\newcommand{\eeqn}{\end{eqnarray}}
\newcommand{\be}{\begin{equation}}
\newcommand{\ee}{\end{equation}}
\newcommand{\ba}{\begin{array}}
\newcommand{\ea}{\end{array}}
\newcommand{\pa}{\partial}
\newcommand{\re}{\ref}
\newcommand{\ci}{\cite}
\newcommand{\la}{\label}
\newcommand{\fr}{\frac}
\newcommand{\si}{\sigma}
\newcommand{\al}{\alpha}
\newcommand{\ds}{\displaystyle}
\newcommand{\ve}{\varepsilon}
\newcommand{\de}{\delta}
\newcommand{\om}{\omega}
\newcommand{\Om}{\Omega}
\newcommand{\lam}{\lambda}
\newcommand{\De}{\Delta}
\newcommand{\tr}{\mathop{\rm tr}\nolimits}
\newcommand{\supp}{\mathop{\rm supp}\nolimits}
\newcommand{\cH}{\cal H}
\newcommand\C{{\mathbb C}}
\newcommand\R{{\mathbb R}}
\def\Re {{\rm Re\, }}
\def\Im {{\rm Im\,}}
\begin{document}

\renewcommand{\theequation}{\thesection.\arabic{equation}}
\newtheorem{theorem}{Theorem}[section]
\renewcommand{\thetheorem}{\arabic{section}.\arabic{theorem}}
\newtheorem{definition}[theorem]{Definition}
\newtheorem{deflem}[theorem]{Definition and Lemma}
\newtheorem{lemma}[theorem]{Lemma}
\newtheorem{remark}[theorem]{Remark}
\newtheorem{cor}[theorem]{Corollary}
\newtheorem{pro}[theorem]{Proposition}

\begin{titlepage}
\begin{center}

{\Large\bf
On convergence to equilibrium distribution \bigskip\\
for Dirac  equation
}
\end{center}
\vspace{2cm}
 \begin{center}
{\large A. Komech}
{\footnote{Supported partly by
Alexander von Humboldt Research Award.}$^{,2}$}\\
{\it Faculty of Mathematics of Vienna University,
1090 Vienna, Austria
\\
and IITP RAS,
Moscow, B.Karetny, 19
}\\
 e-mail:~alexander.komech@mat.univie.ac.at
\bigskip\\
{\large E. Kopylova}
{\footnote{Supported partly by
Austrian Science Fund (FWF): P22198-N13
and RFBR grant 10-01-00578-a.
}
}
\\
{\it IITP RAS,
Moscow, B.Karetny, 19}\\
 e-mail:~elena.kopylova@univie.ac.at
\end{center}
 \vspace{1cm}

\begin{abstract}
We consider  the Dirac equation in $\R^3$
with a potential, and  study the distribution $\mu_t$
of the random solution at  time $t\in\R$.
The initial measure $\mu_0$ has zero mean,
a translation-invariant covariance, and a finite mean charge
density. We also assume that $\mu_0$ satisfies a
mixing condition of Rosenblatt- or Ibragimov-Linnik-type.
The main result is the long time convergence of
projection of $\mu_t$ onto the continuous spectral space.
The limiting measure is Gaussian.

{\it Key words and phrases:}
Dirac equation,  random initial data,  mixing condition,
Gaussian measures, covariance matrices,
characteristic functional, scattering theory.
\\
{\em 2010 Mathematics Subject Classification}: 35Q41, 47A40, 60F05.
\end{abstract}
\end{titlepage}

 \section{Introduction}
This paper can be considered as a continuation of our
papers \ci{DKKS}-\ci{DKM}, \ci{KKM}  which concern the long time
convergence to equilibrium distribution for the linear wave,
Klein-Gordon and Schr\"odinger equations.

The convergence should clarify
the distinguished role of the canonical Maxwell-Boltzmann-Gibbs
equilibrium distribution in statistical physics.
One of fundamental examples is the Kirchhoff-Planck
black body radiation law which specify the equilibrium distribution
for the Maxwell equations, and served as a basis for
creation of quantum mechanics.
The law likely should be correlation function of limiting
equilibrium measure for coupled Maxwell-Schr\"odinger or
Maxwell-Dirac equations.

Our ultimate goal would be the proof of the  convergence for
nonlinear hyperbolic PDEs. At the moment,
a unique result in nonlinear case has been proved by Jaksic and Pillet
 for  wave equation coupled to a nonlinear
finite dimensional Hamiltonian system \ci{JP}.

The main peculiarity of the problem
is the time-reversibility of dynamical equations.
For infinite particle systems this
difficulty was discussed in
Boltzmann-Zermelo debates (1896-1897).
Many attempts were made to deduce the convergence from
an ergodicity for such systems by H. Poincar\'e,
 G. Birkhoff, A. Hinchin, and many others.
However, the ergodicity is not proved until now.

In 1980  R. Dobrushin and Yu. Suhov have introduced
a totally new idea for obtaining the
convergence to equilibrium measures imposing a mixing condition
on initial distributions \ci{DS}
in the context of  infinite particle systems.

We develop this approach for hyperbolic PDEs.
In \ci{DKKS}-\ci{DKM}, \ci{KKM1}-\ci{KKM} the
convergence  to equilibrium distributions has been proved  
for the linear wave,
Klein-Gordon and Schr\"odinger equations with potentials,
for the harmonic crystal, and for the free Dirac equation.
The initial distribution are translation invariant
and satisfy the  mixing condition
of Rosenblatt or Ibragimov-Linnik type.

Here
we consider the linear Dirac equation with the Maxwell potentials in $\R^3$:
\beqn\la{1}
\left\{\ba{l}
i\dot\psi(x,t)=H\psi(x,t):=[-i\al\cdot\nabla+\beta m+V(x)]\psi(x,t)\\
\psi(x,0)=\psi_0(x)
\ea
\right|~~~x\in\R^3
\eeqn
where $\psi(x,t)\in\C^4$, $m>0$ and  $\al=(\al_1,\al_2,\al_3)$.
The hermitian matrices $\beta=\al_0$ and $\al_k$ 
satisfy the following  relations:
$$
\left\{
\ba{ll}
\al^*_k=\al_k,\\
\al_k\al_l+\al_l\al_k=2\delta_{kl}I
\ea
\right|\quad k,l=0,1,2,3,4.
$$
The standard form of the Dirac matrices
$\al_k$ and  $\beta$
(in $2\times 2$ blocks) is
\be\la{ba}
\beta= \left(
\ba{ll}
I & 0\\
0 & -I\\
\ea  \right),\quad
\al_k=
\left(
\ba{ll}
0 & \sigma_k\\
\sigma_k & 0\\
\ea  \right)\quad (k=1,2,3),
\ee
where  $I$ denotes the  unit matrix, and
\be\la{sigma}
\sigma_1=  \left(
\ba{ll}
0 & 1\\
1 & 0\\
\ea         \right),\quad
\sigma_2=   \left(
\ba{ll}
0 & -i\\
i & 0\\
\ea         \right),\quad
\sigma_3=   \left(
\ba{ll}
1 & 0\\
0 & -1\\
\ea         \right).
\ee
We assume the following conditions:
\smallskip\\
{\bf E1.}
The potential $V\in C^{\infty}(\R^3)$  is a hermitian
$4\times 4$ matrix function such that
\be\la{V}
|\pa^{\al}V(x)|\le C(\al)\langle x\rangle^{-\rho-|\al|},\quad
\langle x\rangle^\si=(1+|x|^2)^{\si/2}
\ee
with some $\rho>5$.
\smallskip\\
{\bf E2.} The operator $H$ presents neither resonance
nor eigenvalue at thresholds.
\smallskip\\
Under the condition ${\bf E2}$ the operator $H$ has a finite set of
eigenvalues $\om_j\in (-m,m)$, $j=1,...,N$ with the corresponding
eigenfunctions $\zeta_j^1,...,\zeta_j^{k_j}$, where $k_j$ is the
multiplicity of $\om_j$. Denote by $P_j$ the Riesz projection onto
the corresponding eigenspaces and by
\be\la{pc}
P_c:=1-P_d,\quad P_d=\sum\limits_jP_j
\ee
the projections onto the continuous and discrete
spectral spaces of  $H$.

We fix an arbitrary  $\delta>0$ such that $5+\delta<\rho$ and
consider the solutions $\psi(x,t)\in \C^4$ with initial
data $\psi_0(x)$ which
are supposed to be a random element of the weighted Sobolev space
${\cal H}=L^2_{-5/2-\delta}$,
see Definition \ref{space} below.
The distribution of $\psi_0$ is a Borel probability
measure $\mu_0$ on  $\cal H$ with zero mean
satisfying some additional assumptions, see Conditions {\bf S1-S3} below.
Denote by  $\mu_t$, $t\in\R$,
the measure on  $\cal H$, giving the  distribution of the random
solution  $\psi(t)$ to problem (\re{1}).
We identify the complex and real spaces $\C^4\equiv \R^8$, and
$\otimes$ stands for the tensor product of real vectors.
The correlation functions of the initial measure
are supposed to be translation-invariant:
\be\la{1.9'}
  Q_0(x,y):= E\Big(\psi_0(x)\otimes\psi_0(y)\Big)=
  q_0(x-y),\,\,\,x,y\in\R^3.
\ee
We also assume that the initial mean charge density  is finite:
\be\la{med}
  e_0:=E \vert \psi_0(x)\vert^2=
  \tr q_0(0)<\infty,\quad x\in\R^3.
\ee
Finally, we assume that the measure $\mu_0$ satisfies a mixing
condition of a Rosenblatt- or Ibragimov-Linnik type, which means that
\be\la{mix}
  \psi_0(x)\,\,\,\,   and \, \, \,\,\psi_0(y)
  \,\,\,\,  are\,\,\,\, asymptotically\,\,\,\, independent\,\, \,\,
  as \,\, \,\,|x-y|\to\infty.
\ee
Let $P^*_c\mu_t$ denote the projection of $\mu_t$ onto
the space ${\cal H}_c:=P_c\cal H$.
Our main result is the (weak) convergence
of  $P^*_c\mu_t$ to a limiting measure $\nu_\infty$,
\be\la{1.8i}
  P^*_c\mu_t \rightharpoondown\nu_\infty,\,\,\,\, t\to \infty,
\ee
which is an equilibrium Gaussian measure on ${\cal H}_c$. 
A similar convergence
holds for $t\to-\infty$ since our system is time-reversible.
\medskip

The convergence (\re{1.8i}) for the free Dirac equation with
$V(x)\equiv 0$ has been proved in \ci{DKM}.
The case of the perturbed Dirac equation with $V\not =0$ requires
new constructions due to
the absence an explicit formula for the solution.
To reduce the case of perturbed equation to the case of free
equation  we formally need a scattering theory for the solutions of
infinite global charge.
We manage a dual scattering theory for finite charge solutions to avoid
the infinite charge scattering theory:
\be\la{dsti}
  P_cU'(t)\phi=U'_0(t)W\phi+r(t)\phi,\quad t\ge 0.
\ee
Here  $U'_0(t)$ and $U'(t)$ are  a 'formal adjoint' to the dynamical groups
$U_0(t)$ and $U(t)$ of the free equation with $V\equiv 0$ and
equation (\re{1}) with $V\not =0$ respectively.
The remainder $r(t)$ is small in the mean:
\be\la{rem}
  E|\langle\psi_0,r(t)\phi\rangle|^2\to 0,\,\,\,t\to\infty.
\ee
where $\langle\cdot,\cdot\rangle$ is defined in (\re{1.5'}).
This version of scattering theory
is based on the weighted energy decay established in \ci{Bo}.

 \section{Main results}
\subsection{Well posedness}
\begin{definition}\la{space}
For $s,\si\in\R$, let us denote by $H^s_\si=H^s_\si (\R^3,\C^4)$
the weighted Sobolev spaces with the finite norms
$$
  \Vert\psi\Vert_{H^s_\si}=\Vert\langle x
  \rangle^\si\langle\nabla\rangle^s\psi\Vert_{L^2}<\infty.
$$
\end{definition}
We set $L^2_\si=H^0_\si$.
Note, that the multiplication by $V(x)$
is bounded operator $L^2_\si \to L^2_{\si+\rho}$.
The finite speed of propagation for equation (\ref{1}) implies
\begin{pro}\la{p1.1}
i) For any $\psi_0 \in L^2_{-\si}$ with $0\le\si\le \rho$
 there exists  a unique solution $\psi(\cdot,t)\in C(\R,\,L^2_{-\si})$
 to the Cauchy problem (\re{1}).\\
ii) For any  $t\in \R$, the
operator $U(t):\psi_0\mapsto  \psi(\cdot,t)$
 is continuous in $L^2_{-\si}$.
\end{pro}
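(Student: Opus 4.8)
The plan is to construct the propagator first on the unweighted space $L^2=L^2_0$ by spectral theory, and then transport well-posedness to the weighted spaces $L^2_{-\si}$, $0\le\si\le\rho$, using the finite speed of propagation announced before the statement. For the $L^2$ theory, the free operator $H_0=-i\al\cdot\nabla+\beta m$ is self-adjoint on the domain $H^1(\R^3,\C^4)$, and by {\bf E1} the matrix multiplication $V$ is a bounded hermitian perturbation, so $H=H_0+V$ is self-adjoint on $H^1$. By Stone's theorem $U(t):=e^{-iHt}$ is a strongly continuous unitary group on $L^2$, which gives existence, uniqueness and $L^2$-continuity of $\psi(\cdot,t)=U(t)\psi_0$ for $\psi_0\in L^2$.

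Next I would establish the finite speed of propagation by the standard energy method. Writing (\re{1}) as the symmetric hyperbolic system $\partial_t\psi+\sum_k\al_k\partial_k\psi=-i(\beta m+V)\psi$ with hermitian coefficients $\al_k$, take a solution $w$ with $w(\cdot,0)=0$ on $\{|x|<R\}$ and set $e(t)=\int_{|x|<R-t}|w(x,t)|^2\,dx$ for $0\le t<R$. Since $\beta m+V$ is hermitian the zero-order term contributes no real part, and the identity $2\Re(\bar w\,\al_k\partial_k w)=\partial_k(\bar w\al_k w)$ together with the divergence theorem yields
$$
\dot e(t)=-\int_{|x|=R-t}\bar w\,(I+\al\cdot\hat x)\,w\,dS\le 0,
$$
because $(\al\cdot\hat x)^2=I$ forces the matrix $I+\al\cdot\hat x$ to be nonnegative. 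Hence $w\equiv 0$ throughout the backward cone; this is precisely the finite propagation speed (speed $\le 1$) and, applied on an exhausting family of balls, it also furnishes uniqueness in every class $C(\R,L^2_{-\si})$.

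The key quantitative step, which I expect to be the main obstacle, is the weighted bound
$$
\|U(t)\psi_0\|_{L^2_{-\si}}\le C(t)\,\|\psi_0\|_{L^2_{-\si}},\qquad \psi_0\in L^2,
$$
with $C(t)$ bounded on compact $t$-intervals. I would prove it with a partition of unity $\{\chi_j\}_{j\in\Z^3}$ subordinate to unit balls $B_j$ centered at lattice points, so that $\psi_0=\sum_j\chi_j\psi_0$ with each $\chi_j\psi_0\in L^2$ compactly supported. By the finite speed of propagation $U(t)(\chi_j\psi_0)$ is supported in $B(j,1+|t|)$, on which $\langle x\rangle^{-\si}\le C(t)\langle j\rangle^{-\si}$, while the $L^2$-unitarity of $U(t)$ preserves $\|\chi_j\psi_0\|_{L^2}$. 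Each point lies in at most $C(1+|t|)^3$ of the enlarged balls, so an almost-orthogonality (Cauchy--Schwarz) estimate for $\sum_j U(t)(\chi_j\psi_0)$ reduces the weighted norm to $\sum_j\langle j\rangle^{-2\si}\|\chi_j\psi_0\|_{L^2}^2\approx\|\psi_0\|_{L^2_{-\si}}^2$, which gives the claim. The delicate point is bookkeeping of the overlap and of the comparison $\langle x\rangle\approx\langle j\rangle$ on each ball so that $C(t)$ stays locally bounded.

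Finally I would conclude by density. Since $\langle x\rangle^{-\si}\le 1$ for $\si\ge 0$, the space $L^2$ embeds densely into $L^2_{-\si}$, and the uniform-in-$t$ bound of the previous step extends $U(t)$ to a bounded operator on $L^2_{-\si}$, proving part ii). Continuity of $t\mapsto U(t)\psi_0$ in $L^2_{-\si}$ then follows by approximating an arbitrary $\psi_0\in L^2_{-\si}$ by $L^2$ data $\phi_n$, noting that $U(t)\phi_n\to U(t)\psi_0$ uniformly on compact $t$-intervals while each $U(t)\phi_n$ is $L^2$-continuous, hence $L^2_{-\si}$-continuous. Combined with the uniqueness from the cone estimate, this completes part i).
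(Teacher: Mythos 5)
Your proof is correct, but it takes a genuinely different route from the paper's. Although the proposition is introduced in the paper by the phrase ``the finite speed of propagation implies,'' the written proof never actually uses it: the authors solve the free equation explicitly in Fourier variables, where $e^{i(\al\cdot k-\beta m)t}$ is a multiplier bounded on every Sobolev space $H^s$, so that $U_0(t)$ is bounded on every weighted space $L^2_s$; the perturbed solution is then written as $\psi=U_0(t)\psi_0+\phi$ with $\phi=\int_0^tU(t-\tau)V\chi(\tau)\,d\tau$, and the decay hypothesis {\bf E1} together with $\si\le\rho$ is used to place $V\chi(\tau)$ in the unweighted $L^2$, where charge conservation controls the integrand. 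Your argument --- Stone's theorem on $L^2$, the cone energy identity based on $I+\al\cdot\hat x\ge 0$, and the partition-of-unity transfer of unitarity to $L^2_{-\si}$ --- buys several things: it needs only hermiticity and local boundedness of $V$ rather than its decay, it gives the weighted bound for every $\si\ge 0$ (the restriction $\si\le\rho$ becomes irrelevant), and it yields uniqueness in the class $C(\R,L^2_{-\si})$ directly from the cone estimate, a point the paper's Duhamel argument leaves implicit (uniqueness is asserted there only for the auxiliary problem in the unweighted $L^2$). What the paper's approach buys is brevity and the explicit free-flow splitting that is reused later in the scattering argument. Two details you should still spell out to make your version complete: the energy identity on truncated cones must be justified for solutions that are merely in $C(\R,L^2_{loc})$ (Friedrichs mollification for symmetric hyperbolic systems), and you should note that the continuous extension of $U(t)$ from $L^2$ to $L^2_{-\si}$ still solves the Cauchy problem because the equation passes to the limit in the sense of distributions.
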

\begin{proof}
Fist, consider the free Dirac equation:
\beqn\la{fD}
  \left\{\ba{l}
  \dot\chi(x,t)=H_0\chi(x,t)=
  (-\alpha\cdot\nabla-i\beta m)\chi(x,t)\,\,\,\,x\in\R^3,\\
  \chi(x,0)=\psi_0(x).
\ea \right.
\eeqn
Let $s\in\R$ and $\psi_0 \in L^2_{s}$.
In the Fourier space the solution to (\re{fD}) reads
$$
\hat\chi(k,t)=e^{i(\al\cdot k-\beta m)t}\hat\psi_0(k).
$$
Since $\hat\psi_0\in H^{s}$ then
$\hat\chi(\cdot,t)\in H^{s}$ and the bounds hold
\be\la{bm}
\Vert \chi(\cdot,t)\Vert_{L^2_s}=
C\Vert \hat\chi(\cdot,t)\Vert_{H^s}\le C_s(t)
\Vert \hat\psi_0\Vert_{H^s}
\le C'_s(t)\Vert \psi_0\Vert_{L^2_s}\,.
\ee
\smallskip\\
Now consider  perturbed equation (\re{1}).
Let $0\le\si\le \rho$ and $\psi_0\in L^2_{-\si}$.
We seek the solution to (\re{1}) in the form
\be\la{split}
\psi(x,t)=\chi(x,t)+\phi(x,t),
\ee
where $\chi(t)=U_0(t)\psi_0\in L^2_{-\si}$
is the solution to  free equation (\re{fD}), and
\be\la{vsp}
\dot\phi(x,t)=H\phi(x,t)+V\chi(x,t),\quad \phi(x,0)=0.
\ee
Since $\phi(0)=0$ and $V\chi\in L^2$ then
there exists the unique solution  $\phi(t)\in L^2$
to (\re{vsp}) which is given by  Duhamel representation:
$$
\phi(t)=\int\limits_0^t U(t-\tau)V\chi(\tau)d\tau.
$$
Finally, by charge conservation for the Dirac
equation we obtain
$$
\Vert U(t-\tau)V\chi(\tau)\Vert_{L^2_{-\si}}
\le \Vert U(t-\tau)V\chi(\tau)\Vert_{L^2}=\Vert V\chi(\tau)\Vert_{L^2}
\le C\Vert \chi(\tau)\Vert_{L^2_{-\rho}}
\le C\Vert \chi(\tau)\Vert_{L^2_{-\si}}<\infty\,.
$$
\end{proof}
\subsection{Random solution. Convergence to equilibrium}
Let $(\Om,\Sigma,P)$ be a probability space
with expectation $E$
and ${\cal B}(\cH)$ denote the Borel $\sigma$-algebra
in $\cH$.
We assume that $\psi_0=\psi_0(\om,\cdot)$ in (\re{1})
is a measurable random function
with values in $(\cH,\,{\cal B}(\cH))$.
In other words, $(\om,x)\mapsto \psi_0(\om,x)$
is a measurable  map
$\Om\times\R^3\to\C^4$ with respect to the
(completed) $\sigma$-algebras
$\Sigma\times{\cal B}(\R^3)$ and ${\cal B}(\C^4)$.
Then, owing to Proposition \re{p1.1},
$\psi(t)=U(t) \psi_0$ is again a measurable  random
function with values in
$(\cH,{\cal B}(\cH))$.
We denote by $\mu_0(d\psi_0)$ a Borel probability measure
in $\cH$ giving
the distribution of the random function $\psi_0$.
Without loss of generality,
we assume $(\Om,\Sigma,P)=
(\cH,{\cal B}(\cH),\mu_0)$
and $\psi_0(\om,x)=\om(x)$ for
$\mu_0(d\om)\times dx$-almost all
$(\om,x)\in {\cH}\times\R^3$.
\begin{definition}
$\mu_t$ is a probability measure on $\cH$
which gives the distribution of $\psi(t)$:
\begin{eqnarray}\la{1.6}
  \mu_t(B) = \mu_0(U(-t)B),\quad 
  \forall B\in {\cal B}({\cH}),\,\,\,   t\ge 0.
\eeqn
\end{definition}
Denote by $P^*_c\mu_t$ the projection of measure $\mu_t$
onto  ${\cal H}_c=P_c\cal H$:
\begin{eqnarray}\la{Pm}
  P^*_c\mu_t(B) = \mu_t(P_c^{-1}B),\quad 
  \forall B\in {\cal B}({{\cal H}_c}),\,\,\,   t\ge 0.
\eeqn

Our main goal is to derive
the weak convergence of  $P^*_c\mu_t$
in the Hilbert space
 $P_cH^{-\ve }_{-\si}$ for any  $\ve>0$, and $\si>5/2+\delta$:
 \be\la{1.8}
 P^*_c\mu_t\,\buildrel {\hspace{2mm} P_cH^{-\ve }_{-\si}}\over
 {- \hspace{-2mm} \rightharpoondown }
 \, \nu_\infty {\rm ~~ as ~~}t\to \infty,
 \ee
 where $\nu_\infty$ is a Borel probability measure
 on  $P_cH^{-\ve }_{-\si}$.
 By definition, this means the convergence
 \be\la{1.8'}
 \int f(\psi)P_c^*\mu_t(d\psi)\rightarrow
 \int f(\psi)\nu_\infty(d\psi){\rm ~~ as ~~}t\to \infty.
 \ee
 for any bounded and continuous functional $f(\psi)$
 in   $P_cH^{-\ve }_{-\si}$.

Set
 ${\cal R}\psi\equiv (\Re \psi,\Im \psi)=
\{\Re\psi_1,\dots,\Re\psi_4,\Im\psi_1,\dots,\Im\psi_4\}$
for $\psi= (\psi_1,\dots\psi_4)\in \C^4$
and denote by  ${\cal R}^j\psi$
the  $j$-th component of the vector
${\cal R}\psi$, $j=1,...,8$.
The brackets $(\cdot ,\cdot)$ mean
the inner product
in the real  Hilbert spaces
 $L^2\equiv L^2(\R^3)$,  in $L^2\otimes \R^N$, or in some
their  different extensions.
For $\psi(x),\phi(x)\in L^2(\R^3,\C^4)$, write
\be\la{1.5'}
\langle\psi,\phi\rangle:=
 ({\cal R}\psi,{\cal R}\phi)=
\sum\limits_{j=1}^8({\cal R}^j\psi,{\cal R}^j\phi).
\ee
\begin{definition}\la{Cor-f}
The correlation functions of the measure $\mu_0$ are
defined by
\be\la{qd}
  Q_0^{ij}(x,y)\equiv E\Big({\cal R}^i\psi_0(x){\cal R}^j\psi_0(y)\Big)
\quad\mbox{for almost all }\,\, x,y\in\R^3,~~i,j=1,...,8,
\ee
provided that the  expectations in the right-hand side are finite.
\end{definition}
Denote  by $D$ the space of complex- valued functions
in $C_0^\infty(\R^3)$ and write ${\cal D}:=[D]^4$.
For a Borel probability  measure $\mu$ 
denote by $\hat\mu$
the characteristic functional (the Fourier transform)
$$
\hat \mu(\phi )  \equiv
 \int\exp(i\langle\psi,\phi \rangle)\,\mu(d\psi),\,\,\,
 \phi\in  {\cal D}.
$$
A  measure $\mu$ is said to be Gaussian (with zero expectation) if
its characteristic functional is of the form
$$
\ds\hat{\mu}(\phi ) =  \ds \exp\{-\fr{1}{2}
 {\cal Q}(\phi , \phi )\},\,\,\,\phi \in {\cal D},
$$
where ${\cal Q}$ is a real nonnegative
quadratic form on ${\cal D}$.
A measure $\mu$ on $\cH$ is said to be
translation-invariant if
$$
\mu(T_h B)= \mu(B),\,\,\,\,\, B\in{\cal B}({\cH}),
\,\,\,\, h\in\R^3,
$$
where $T_h \psi(x)= \psi(x-h)$, $x\in\R^3$.

\subsection{Mixing condition}
Let $O(r)$ be the set of all pairs of open bounded subsets
${\cal A}, {\cal B}\subset \R^3$ at the distance not less
than $r$, dist$({\cal A},\,{\cal B})\geq r$, and let
$\sigma ({\cal A})$ be the $\sigma$-algebra in
$\cH$ generated by the linear functionals
$\psi\mapsto\, \langle\psi,\phi\rangle$,
where  $\phi\in  {\cal D}$ with $ \supp \phi \subset {\cal A}$.
Define the
Ibragimov-Linnik mixing coefficient
of a probability measure $\mu_0$ on  $\cH$
by the rule (cf. \ci[Def. 17.2.2]{IL})
\be\la{7}
\varphi(r)\equiv
\sup_{({\cal A},{\cal B})\in O(r)} \sup_{
\ba{c} A\in\si({\cal A}),B\in\si({\cal B})\\ \mu_0(B)>0\ea}
\fr{| \mu_0(A\cap B) - \mu_0(A)\mu_0(B)|}{ \mu_0(B)}.
\ee
\begin{definition}
We say that the measure $\mu_0$ satisfies the strong uniform
Ibragimov-Linnik mixing condition if
\be\la{1.11}
\varphi(r)\to 0\quad{\rm as}\quad r\to\infty.
\ee
\end{definition}
We specify the rate of  decay of $\varphi$ below
(see Condition {\bf S3}).


\subsection{Main assumptions and results}

We assume that the measure $\mu_0$
has the following properties {\bf S0--S3}:
\bigskip\\
{\bf S0}
$\mu_0$ has zero expectation value,
$$
  E\psi_0(x)  \equiv  0,\,\,\,x\in\R^3.
$$
{\bf S1}
$\mu_0$ has translation invariant correlation functions,
\be\la{corf}
  Q_0^{ij}(x,y)\equiv E\Big({\cal R}^i\psi_0(x)
  {\cal R}^j\psi_0(y)\Big)
  =q_0^{ij}(x-y),\quad i,j=1,...,8
\ee
for almost all $x,y\in\R^3$.\\
{\bf S2}   $\mu_0$ has  finite
mean charge density, i.e. Eqn (\re{med}) holds.\\
{\bf S3}
 $\mu_0$ satisfies the strong uniform
Ibragimov-Linnik mixing condition, with
\be\la{1.12}
  \int _0^\infty r^{2}\varphi^{1/2}(r)dr <\infty.
\ee
\begin{remark}
The examples of measures on  $L^2_{loc}(\R^3)$
satisfying properties {\bf S0}-{\bf S3}
have been constructed in \ci{DKKS} (see \S\S 2.6.1-2.6.2).
The measures on $L^2_{-\si}$ with any $\si>3/2$ can be construct
similarly.
\end{remark}
Introduce the following $8\times 8$
real valued matrices (in $4\times 4$ blocks)
\be\la{matr}
\Lambda_1= \left(
\ba{ll}
\alpha_1 & 0\\
0 & \alpha_1\\
\ea  \right),~~
\Lambda_2= \left(
\ba{ll}
0&i\alpha_2\\
- i\alpha_2 &0\\
\ea  \right),~~
\Lambda_3= \left(
\ba{ll}
\alpha_3 & 0\\
0 & \alpha_3\\
\ea  \right),~~
\Lambda_0= \left(
\ba{ll}
0&-\beta\\
 \beta&0\\
\ea  \right).
\ee
Note that
$\Lambda_k^T=\Lambda_k$,
$k=1,2,3$, $\Lambda_0^T=-\Lambda_0$. Write
\be\la{LP}
{\Lambda}=(\Lambda_1,\Lambda_2,\Lambda_3),\quad
 P={\Lambda}\cdot\nabla+m\Lambda_0.
\ee
For almost all $x,y\in\R^3$, introduce the  matrix-valued function
\be\la{Q}
  Q_{\infty}(x,y)\equiv
  \Big(Q_{\infty}^{ij}(x,y)\Big)_{i,j=1,\dots,8}
  =\Big(q_\infty^{ij}(x-y)\Big)_{i,j=1,\dots,8}.
\ee
Here
\be\la{qk}
\hat q_{\infty}(k)=\fr 12\hat q_0(k)-\fr 12\hat{\cal P}(k)
\hat P(k)\hat q_0(k)\hat P(k),
\ee
$\hat P(k)=-i\Lambda\cdot k+m\Lambda_0$,
$\hat{\cal P}(k)=1/(k^2+m^2)$, and $\hat q_0(k)$ is the
Fourier transform of the correlation matrix of the measure
$\mu_0$ (see \re{corf}). We formally have
\beqn \la{1.13'}
 q_\infty (z)
=\frac{1}{2}q_0(z)+\frac{1}{2}{\cal P}*P q_0(z)P
\eeqn
where ${\cal P} (z)=e^{-m|z|}/(4\pi|z|)$
is the fundamental solution for the operator $-\De+m^2$, and
$*$ stands for the convolution of distributions.
\begin{lemma}\la{qq}
Let conditions  {\bf S0}, {\bf S2} and  {\bf S3} hold. Then
\be\la{qp}
q_0\in L^p(\R^3),\quad p\ge 1.
\ee
\end{lemma}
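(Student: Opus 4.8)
The plan is to obtain $q_0\in L^p$ for all $p\ge 1$ by interpolating between a uniform ($L^\infty$) bound, furnished by the finite charge density, and integrability in $L^1$, furnished by the mixing condition. First I would use {\bf S2} to give meaning to the pointwise values. Since $E|\psi_0(x)|^2=\tr q_0(0)=e_0<\infty$ (see \re{med}), each real component ${\cal R}^i\psi_0(x)$ is, for almost every $x$, a genuine random variable in $L^2(\Om,P)$ with second moment bounded by $e_0$. Recalling that $q_0^{ij}(x-y)=E\big({\cal R}^i\psi_0(x){\cal R}^j\psi_0(y)\big)$ (see \re{corf}), the Cauchy--Schwarz inequality gives $|q_0^{ij}(x-y)|\le e_0$ for almost all $x,y$, so that $q_0\in L^\infty(\R^3)$ with $\|q_0\|_{L^\infty}\le e_0$.

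Next I would extract the decay of $q_0(z)$ as $|z|\to\infty$ from {\bf S0} and {\bf S3}. The tool is the Ibragimov--Linnik covariance inequality (cf. \ci{IL}): if $\xi$ is $\si({\cal A})$-measurable and $\eta$ is $\si({\cal B})$-measurable with $\mathrm{dist}({\cal A},{\cal B})\ge r$, then $|E\xi\eta-E\xi\,E\eta|\le C\varphi^{1/2}(r)\|\xi\|_{L^2}\|\eta\|_{L^2}$. A preliminary point is that ${\cal R}^i\psi_0(x)$ is measurable with respect to $\si({\cal A})$ for every open neighbourhood ${\cal A}$ of $x$; this follows by approximating the pointwise value by the smeared functionals $\langle\psi_0,\phi_n\rangle$ with $\supp\phi_n\subset{\cal A}$ and passing to an almost-everywhere limit. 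Taking ${\cal A},{\cal B}$ to be unit balls centred at $x,y$ with $|x-y|>2$, so that $\mathrm{dist}({\cal A},{\cal B})\ge|x-y|-2$, and using $E{\cal R}^i\psi_0=0$ from {\bf S0}, the inequality yields
$$
|q_0^{ij}(x-y)|\le C\,e_0\,\varphi^{1/2}(|x-y|-2).
$$

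Finally I would assemble the two estimates. The contribution of the ball $\{|z|\le 2\}$ to $\int|q_0|$ is finite by the $L^\infty$ bound, while passing to polar coordinates in $\R^3$, using monotonicity of $\varphi$ and the shift $r\mapsto r-2$,
$$
\int_{|z|>2}|q_0(z)|\,dz\le C\int_0^\infty (r+2)^2\varphi^{1/2}(r)\,dr<\infty,
$$
the finiteness being exactly the content of {\bf S3}, condition \re{1.12} (the weight $r^2$ is the three-dimensional volume element, and the lower-order terms are absorbed since $\varphi\le 1$ and $\int r^2\varphi^{1/2}\,dr$ dominates $\int\varphi^{1/2}\,dr$ at infinity). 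Hence $q_0\in L^1(\R^3)$, and for any $p\ge1$,
$$
\|q_0\|_{L^p}^p=\int|q_0|^{p-1}|q_0|\le\|q_0\|_{L^\infty}^{p-1}\|q_0\|_{L^1}<\infty,
$$
which proves the claim.

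The step I expect to be the genuine obstacle is the rigorous application of the covariance inequality to the a.e.-defined field values: one must justify that ${\cal R}^i\psi_0(x)$ is $\si({\cal A})$-measurable and that the passage from the smeared functionals to the pointwise limit is legitimate. The remaining ingredients---Cauchy--Schwarz, the matching of the radial integral with \re{1.12}, and the $L^1$--$L^\infty$ interpolation---are routine.
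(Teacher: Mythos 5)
Your proof is correct and follows essentially the same route as the paper: both rest on the Ibragimov--Linnik covariance inequality (the paper cites \ci[Lemma 17.2.3]{IL}) to get $|q_0^{ij}(z)|\le Ce_0\varphi^{1/2}(|z|)$ and then integrate in polar coordinates using condition (\re{1.12}). The only cosmetic difference is that you pass through an explicit $L^1$--$L^\infty$ interpolation, whereas the paper obtains $L^p$ directly by noting that $\varphi$ is bounded, so $\varphi^{p/2}\le C\varphi^{1/2}$ for $p\ge 1$.
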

\begin{proof}
Conditions  {\bf S0}, {\bf S2} and  {\bf S3} imply
(cf. \ci[Lemma 17.2.3]{IL}) that
$$
|q_0^{ij}(z)|\le Ce_0\varphi^{1/2}(|z|),\quad z\in\R^3,
\quad i,j=1,...,8.
$$
The mixing coefficient $\varphi$ is bounded, hence
$$
\int |q_0^{ij}(z)|^pdz\le C\int\varphi^{p/2}(|z|)dz
\le C_1\int_0^\infty r^2\varphi^{1/2}(r)dr<\infty
$$
by  (\re{1.12}).
\end{proof}
Lemma \re{qq} with $p=2$ imply that $\hat q_0\in L^2$.
Hence, $\hat q_{\infty}\in L^2$ by (\re{qk}), and $q_\infty$ also
belongs to $L^2$ by (\re{1.13'}).

Denote by
${\cal Q}_{\infty}$ a real quadratic form on $L^2$ defined by
$$
  {\cal Q}_\infty (\phi,\phi)\equiv
  (Q_\infty(x,y),{\cal R}\phi(x)\otimes{\cal R}\phi(y))
  =\sum\limits_{i,j=1}^{8}
  \int_{\R^3\times\R^3}Q_\infty^{ij}(x,y){\cal R}^i\phi(x)
  {\cal R}^j\phi(y)dxdy
$$
\begin{cor}\la{coro}
The form ${\cal Q}_{\infty}$ is continuous on $L^2$
because $\hat q_0(k)$ and then $\hat q_\infty(k)$
are bounded by Lemma \ref{qq} and formula (\re{qk}).
\end{cor}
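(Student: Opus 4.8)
The plan is to pass to the Fourier transform and reduce the continuity of ${\cal Q}_\infty$ to a uniform bound on the matrix symbol $\hat q_\infty(k)$. First I would exploit that the kernel is a convolution, $Q_\infty^{ij}(x,y)=q_\infty^{ij}(x-y)$. Writing $u=(u_1,\dots,u_8):={\cal R}\phi$ for the real components and applying Parseval's identity, the form becomes
$$
  {\cal Q}_\infty(\phi,\phi)=c\sum_{i,j=1}^8\int_{\R^3}
  \hat q_\infty^{ij}(k)\,\hat u_j(k)\,\overline{\hat u_i(k)}\,dk,
  \qquad c=(2\pi)^{-3},
$$
where I use that each $u_i$ is real, so that $\hat u_i(-k)=\overline{\hat u_i(k)}$. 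For $\phi\in{\cal D}$ this identity is immediate: since $q_\infty\in L^2\subset L^1_{loc}$ and the $u_i$ are smooth and compactly supported, the defining double integral is absolutely convergent, which legitimizes Fubini and Plancherel.

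Next I would estimate the integrand pointwise in $k$ by the operator norm of the matrix $\hat q_\infty(k)$ acting on the vector $\hat u(k)\in\C^8$,
$$
  \Big|\sum_{i,j}\hat q_\infty^{ij}(k)\,\hat u_j(k)\,\overline{\hat u_i(k)}\Big|
  \le \|\hat q_\infty(k)\|\,|\hat u(k)|^2 .
$$
Integrating in $k$ and using Plancherel, together with $\int|\hat u(k)|^2dk=c^{-1}\|{\cal R}\phi\|_{L^2}^2=c^{-1}\|\phi\|_{L^2}^2$, gives
$$
  |{\cal Q}_\infty(\phi,\phi)|\le \Big(\sup_k\|\hat q_\infty(k)\|\Big)\,\|\phi\|_{L^2}^2 ,
$$
so everything reduces to the uniform bound $\hat q_\infty\in L^\infty$; continuity on all of $L^2$ then follows by density of ${\cal D}$.

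For the $L^\infty$ bound on the symbol I would invoke formula (\ref{qk}). Lemma \ref{qq} with $p=1$ gives $q_0\in L^1$, hence $\hat q_0\in L^\infty$ with $\|\hat q_0\|_{L^\infty}\le\|q_0\|_{L^1}$, and the first term $\tfrac12\hat q_0(k)$ is therefore bounded. For the second term $\tfrac12\hat{\cal P}(k)\hat P(k)\hat q_0(k)\hat P(k)$ I would use that the relations $\Lambda_k^2=I$ and $\Lambda_0^2=-I$ implied by (\ref{matr}) give $\|\hat P(k)\|\le C(|k|+m)$, while $\hat{\cal P}(k)=(k^2+m^2)^{-1}$; hence
$$
  \hat{\cal P}(k)\,\|\hat P(k)\|^2\le C\,\frac{|k|^2+m^2}{k^2+m^2}\le C',
$$
and combined with $\|\hat q_0\|_{L^\infty}$ this bounds the second term uniformly in $k$.

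The argument carries no genuine obstacle; the only point requiring a line of computation is the symbol estimate for the second term, where the quadratic growth $|k|^2$ produced by the two factors $\hat P(k)$ is exactly compensated by the decay $(k^2+m^2)^{-1}$ of $\hat{\cal P}(k)$. This cancellation is precisely what is encoded in the structure of (\ref{qk}), and it is what keeps $\hat q_\infty$ bounded even though $\hat P$ itself grows linearly in $k$.
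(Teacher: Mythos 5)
Your proof is correct and follows the same route the paper intends: the corollary's justification is precisely that $q_0\in L^1$ (Lemma \ref{qq} with $p=1$) gives $\hat q_0\in L^\infty$, and the factor $\hat{\cal P}(k)=(k^2+m^2)^{-1}$ in (\ref{qk}) absorbs the quadratic growth of the two factors $\hat P(k)$, so $\hat q_\infty\in L^\infty$ and the convolution form is $L^2$-continuous by Plancherel. You have merely written out in full the Fourier-multiplier argument that the paper compresses into one sentence.
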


Our main result is the following:
\begin{theorem}\la{tA}
\it    Let   $m>0$,
and let conditions {\bf E1--E2}, {\bf S0--S3} hold.
 Then \\
i) the convergence in (\re{1.8}) holds for any $\ve>0$ and $\si>5/2+\delta$.\\
ii) the limiting measure $ \mu_\infty $ is a Gaussian equilibrium
measure on ${\cal H}_c$.\\
iii) the  characteristic functional of $\nu_\infty$ is of  the form
$$
\ds\hat { \nu}_\infty (\phi ) = \exp
\{-\fr{1}{2}
{\cal  Q}_\infty (W \phi, W \phi)\},\,\,\,
\phi \in {\cal D},
$$
where  $W: {\cal D}\to L^2$ is
 a linear continuous operator.
\end{theorem}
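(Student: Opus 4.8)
The plan is to follow the standard three-step strategy for convergence to equilibrium under mixing, adapted to the perturbed Dirac equation via the dual scattering theory (\ref{dsti}). The three things to establish are: (i) tightness of the family $\{P^*_c\mu_t\}$ in the local space $P_cH^{-\ve}_{-\si}$; (ii) convergence of the characteristic functionals $\hat{\mu}_t(\phi)\to\hat{\nu}_\infty(\phi)$ for each test function $\phi\in\mathcal{D}$; and (iii) identification of the limit as the Gaussian with the stated covariance. Since the family is tight and the limit of the characteristic functionals is continuous (Corollary \ref{coro}), Prokhorov's theorem plus the Lévy-type uniqueness of limits forces the weak convergence (\ref{1.8}), which is part i), while ii) and iii) follow from the explicit form of the limiting functional.

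\medskip

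First I would reduce the dynamics to the free case. The characteristic functional of $P^*_c\mu_t$ is $\hat{\mu}_t(P_c\phi)=E\exp\{i\langle\psi(t),P_c\phi\rangle\}=E\exp\{i\langle\psi_0, U'(t)P_c\phi\rangle\}$, where $U'(t)$ is the formal adjoint group. Applying the dual scattering relation $P_cU'(t)\phi=U'_0(t)W\phi+r(t)\phi$ and the remainder bound (\ref{rem}), the correction term $\langle\psi_0,r(t)\phi\rangle$ is negligible in the mean, so up to a vanishing error the functional equals $E\exp\{i\langle\psi_0,U'_0(t)W\phi\rangle\}$. This is exactly the free-equation characteristic functional evaluated at the test function $W\phi$. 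Hence the whole problem is transported to the free Dirac equation, for which convergence was already established in \ci{DKM}: the free functional at $W\phi$ converges to $\exp\{-\frac12\mathcal{Q}_\infty(W\phi,W\phi)\}$, yielding the form in iii) and the Gaussian identification in ii). The covariance $q_\infty$ in (\ref{qk})--(\ref{1.13'}) arises by taking the time-average of the free covariance: the oscillatory off-diagonal contributions average out by a Riemann-Lebesgue/stationary-phase argument in Fourier space, leaving the $\frac12$-factor and the projected term built from $\hat P(k)$ and $\hat{\mathcal{P}}(k)$.

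\medskip

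For tightness I would bound the second moments $E\Vert P_c\psi(t)\Vert^2_{H^{-\ve}_{-\si}}$ uniformly in $t$. By charge conservation and the weighted energy decay from \ci{Bo}, together with the free covariance estimate coming from Lemma \ref{qq} (which guarantees $q_0,q_\infty\in L^2$), one controls $E\Vert P_c\psi(t)\Vert^2$ by the initial energy form, uniformly in $t$; the compact embedding $H^{-\ve'}_{-\si'}\hookrightarrow H^{-\ve}_{-\si}$ for suitable exponents then gives tightness in the target space. The mixing condition {\bf S3} with the decay (\ref{1.12}) is what makes the covariances integrable and furnishes the Gaussian character of the limit through a central-limit argument in the spirit of \ci{DKKS}: one partitions space into well-separated cells, uses the mixing coefficient $\varphi(r)$ to decouple the cell contributions, and applies the Ibragimov-Linnik CLT to the sum, so that the limit of $\hat{\mu}_t$ is Gaussian rather than merely a weak limit of a bounded family.

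\medskip

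The main obstacle is the scattering step itself, i.e.\ justifying (\ref{dsti}) together with the remainder estimate (\ref{rem}) on the random data in $\cal H$. Unlike the free case, there is no explicit solution formula, and one must handle solutions of infinite global charge; the device here is to run the scattering theory only on finite-charge test functions $\phi$ (the "dual" or formal-adjoint side) while keeping the random, infinite-charge data $\psi_0$ paired against them. Controlling $E|\langle\psi_0,r(t)\phi\rangle|^2$ requires combining the weighted energy decay of \ci{Bo} for the wave operator $W$ with the translation-invariant covariance $q_0\in L^2$ from Lemma \ref{qq}, so that the expectation reduces to an $L^2$-pairing of a decaying deterministic remainder against a fixed covariance kernel. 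Making the interchange of expectation, scattering limit, and projection rigorous in the weak-star topology on $P_cH^{-\ve}_{-\si}$ is where the real work lies; the remaining CLT and tightness arguments are adaptations of the linear Klein-Gordon and Schrödinger cases in \ci{DKKS}-\ci{DKM}.
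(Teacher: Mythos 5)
Your proposal follows essentially the same route as the paper: tightness via uniform moment bounds combined with the Duhamel splitting and Boussaid's weighted decay, plus convergence of characteristic functionals by transporting the problem to the free Dirac equation through the dual (Cook-method) scattering relation and the mean-square remainder bound, then invoking the free-equation result of \ci{DKM}. The only step you gloss over is that the free convergence result holds for test functions in ${\cal D}$ while $W\phi$ is only in $L^2$, which the paper handles by a routine density approximation using the continuity of ${\cal Q}_\infty$.
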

\subsection{Remark on various mixing conditions for initial measure}
We use the {\it strong uniform}
Ibragimov-Linnik mixing condition for the simplicity of our presentation.
The {\it uniform} Rosenblatt mixing condition
\ci{Ros} with a higher degree $>2$ in the bound (\re{med}) is also  sufficient.
In this case we assume that there exists an $\epsilon$, $\epsilon >0$,
such that
$$
\sup\limits_{x\in\R^3}
E |\psi_0(x)|^{2+\epsilon}<\infty.
$$
Then condition (\re{1.12}) requires the following  modification:
$$
\int _0^\infty r\al^{p}(r)dr <\infty,\quad
p=\min(\epsilon/(2+\epsilon), 1/2),
$$
where $\al(r)$ is the  Rosenblatt mixing coefficient  defined
as in  (\re{7}), but without the denominator $\mu_0(B)$.
The statements of Theorem \re{tA} and their proofs remain essentially
unchanged.
\setcounter{equation}{0}
   \section{Free Dirac equation }

Here we consider the free Dirac equation (\re{fD})
We have
$$
(\pa_t-\al\cdot\nabla-i\beta m)(\pa_t+\al\cdot\nabla+i\beta m)
=\pa^2_t-\Delta+m^2
$$
Then the fundamental solution $G(x,t)$ of the free Dirac operator
reads
\be\la{G}
G_t(x)=(\pa_t-\al\cdot\nabla-i\beta m){\cal E}_t(x)
\ee
where ${\cal E}_t(x)$ is the fundamental solution of the
Klein-Gordon operator $\pa^2_t-\Delta+m^2$:
\be\la{cE}
{\cal E}_t(x)=F^{-1}_{k\to x}\fr{\sin\om t}{\om},\quad
\om=\om(k)=\sqrt{|k|^2+m^2}.
\ee
Using the notations (\re{matr}) and (\re{LP}), we obtain in real form
\be\la{xi-sol}
{\cal R}\chi(t)={\cal G}_t*{\cal R}\psi_0, \quad {\cal G}_t=(\pa_t-P){\cal E}_t.
\ee
The convolution exists since the distribution
${\cal E}_t(x)$ is supported by the ball $|x|\le t$.
Now we derive an explicit formula for the correlation function
\be\la{qt}
Q_t(x,y)=q_t(x-y)=E\Big({\cal R}\chi(x,t)\otimes{\cal R}\chi(y,t)\Big)
\ee
\begin{lemma}\la{cor-f} (cf. \ci[Formula (4.6)]{DKM})
The correlation function $Q_t(x,y)$ reads
\beqn\nonumber
Q_t(x,y)&=&q_t(x-y)
=F^{-1}_{k\to x-y}
\Big[\fr{1+\cos 2\om t}2\hat q_0(k)
-\fr{\sin 2\om t}{2\om}(\hat q_0(k)P(k)-P(k)\hat q_0(k))\\
\la{Qt-rep}
&-&\fr{1-\cos 2\om t}{2\om^2}P(k)\hat q_0(k)P(k)\Big]
\eeqn
\end{lemma}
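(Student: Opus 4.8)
The plan is to compute the correlation function $Q_t$ directly from the explicit real-form solution \eqref{xi-sol} and the translation invariance of the initial correlations, working in the Fourier variable throughout. First I would write the solution operator in Fourier space. From \eqref{xi-sol} we have ${\cal R}\chi(t)={\cal G}_t*{\cal R}\psi_0$ with ${\cal G}_t=(\pa_t-P){\cal E}_t$, so that on the Fourier side the convolution becomes multiplication by a matrix symbol $\hat{\cal G}_t(k)=(\pa_t-\hat P(k))\frac{\sin\om t}{\om}$. Using $\hat P(k)=-i\Lambda\cdot k+m\Lambda_0$ and $\pa_t\frac{\sin\om t}{\om}=\cos\om t$, this symbol is $\cos\om t - \hat P(k)\frac{\sin\om t}{\om}$.

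Next I would insert this into the definition \eqref{qt} of $Q_t$. Since ${\cal R}\chi(x,t)=\int \hat{\cal G}_t(k)\hat{\cal R}\psi_0(k)e^{ik\cdot x}\,dk$ in the appropriate sense, the expectation $E\big({\cal R}\chi(x,t)\otimes{\cal R}\chi(y,t)\big)$ factors through $E(\hat{\cal R}\psi_0\otimes\hat{\cal R}\psi_0)$, which by translation invariance \eqref{corf} is governed by $\hat q_0(k)$. Concretely, I expect $\hat q_t(k)=\hat{\cal G}_t(k)\,\hat q_0(k)\,\hat{\cal G}_t(k)^T$, and since $\Lambda_k^T=\Lambda_k$ for $k=1,2,3$ while $\Lambda_0^T=-\Lambda_0$, one checks that $\hat P(k)^T=\hat P(-k)$ and, after accounting for the reflection $k\mapsto -k$ coming from the second factor in the tensor product, the transpose symbol contributes a factor that I will write in terms of $\hat P(k)$ itself. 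Expanding the product
$$
\Big(\cos\om t-\hat P\tfrac{\sin\om t}{\om}\Big)\hat q_0\Big(\cos\om t-\hat P\tfrac{\sin\om t}{\om}\Big)^T
$$
gives four terms: a $\cos^2\om t\,\hat q_0$ term, two mixed terms of order $\cos\om t\sin\om t$ linear in $\hat P$, and a $\sin^2\om t$ term quadratic in $\hat P$.

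Finally I would apply the double-angle identities $\cos^2\om t=\tfrac{1+\cos 2\om t}{2}$, $\sin^2\om t=\tfrac{1-\cos 2\om t}{2}$, and $2\cos\om t\sin\om t=\sin 2\om t$, which convert the four expanded terms into exactly the three grouped expressions displayed in \eqref{Qt-rep}, with the mixed terms combining into the antisymmetric combination $\hat q_0(k)\hat P(k)-\hat P(k)\hat q_0(k)$ and the quadratic term yielding $\hat P(k)\hat q_0(k)\hat P(k)$. Taking the inverse Fourier transform $F^{-1}_{k\to x-y}$ then produces the claimed formula for $Q_t(x,y)=q_t(x-y)$.

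The main obstacle I anticipate is bookkeeping the transpose and the reflection $k\mapsto -k$ correctly so that the symmetry properties $\Lambda_k^T=\Lambda_k$ versus $\Lambda_0^T=-\Lambda_0$ conspire to give precisely the signs in \eqref{Qt-rep}; in particular, verifying that the two mixed terms assemble into the commutator $\hat q_0\hat P-\hat P\hat q_0$ rather than the anticommutator requires tracking these signs carefully, and this is where an error in the sign of $\Lambda_0$ would show up. A secondary point is justifying the interchange of expectation with the Fourier integrals, which is harmless here since $\hat q_0\in L^2$ follows from Lemma \ref{qq} and the symbol $\hat{\cal G}_t(k)$ is bounded for each fixed $t$.
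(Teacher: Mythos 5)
Your proposal follows essentially the same route as the paper: pass to Fourier space where (\ref{xi-sol}) gives the symbol $\hat{\cal G}_t(k)=\cos\om t-\hat P(k)\frac{\sin\om t}{\om}$, use translation invariance to reduce the correlation to $\hat q_t(k)=\hat{\cal G}_t(k)\hat q_0(k)\hat{\cal G}_t^*(k)$, and expand with double-angle identities. The only difference is that the paper dispatches the adjoint factor in one line via $\hat P^*(k)=-\hat P(k)$, so $\hat{\cal G}_t^*(k)=\cos\om t+\hat P(k)\frac{\sin\om t}{\om}$, which is exactly the sign bookkeeping you flag as the delicate point (and note your intermediate identity should read $\hat P(k)^T=-\hat P(-k)$ rather than $\hat P(-k)$, since $\Lambda_0^T=-\Lambda_0$).
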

\begin{proof}
Applying the Fourier transform to (\re{xi-sol}) we obtain
\be\la{hc}
\widehat{{\cal R}\chi}(k,t)=\hat {\cal G}_t(k)\widehat{{\cal R}\psi}_0(k)
=\Big(\cos\om t-\hat P(k)\fr{\sin\om t}{\om}\Big)\hat\psi_0(k)
\ee
By translation invariance condition (\re{corf}) we have
$$
E(\widehat{{\cal R}\psi}_0(k)\otimes
\widehat{{\cal R}\psi}_0(k'))=F_{x\to k,y\to k'}
q_0(x-y)=(2\pi)^3\delta(k-k')\hat q_0(k)
$$
Then (\re{hc}) implies that
$$
E(\widehat{{\cal R}\chi}(k,t)\otimes
\widehat{{\cal R}\chi}(k',t))
=(2\pi)^3\delta(k-k')\hat {\cal G}_t(k)\hat q_0(k)\hat {\cal G}_t^*(k)
$$
Therefore,
$$
\hat q_t(k)=\hat {\cal G}_t(k)\hat q_0(k)\hat {\cal G}_t^*(k)
=\Big(\cos\om t-\hat P(k)\fr{\sin\om t}{\om}\Big)\hat q_0(k)
\Big(\cos\om t+\hat P(k)\fr{\sin\om t}{\om}\Big)
$$
since $\hat P^*(k)=-\hat P(k)$. Hence (\re{Qt-rep}) follows.
\end{proof}
\begin{cor}
For any $z\in\R^3$ the convergence holds
$$
q_t(z)\to q_\infty(z),\quad t\to\infty
$$
where $q_\infty(z)$ is defined in (\re{1.13'}).
\end{cor}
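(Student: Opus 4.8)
The plan is to read the limit straight off the explicit formula in Lemma~\re{cor-f} by splitting the $t$--independent part of $\hat q_t(k)$ from the purely oscillatory remainder. Collecting in (\re{Qt-rep}) the terms that carry no factor $\cos 2\om t$ or $\sin 2\om t$ gives
\[
\hat q_t(k)=\hat q_\infty(k)+r_t(k),\qquad
\hat q_\infty(k)=\fr12\hat q_0(k)-\fr1{2\om^2}\hat P(k)\hat q_0(k)\hat P(k),
\]
and since $\hat{\cal P}(k)=1/\om^2$ this is exactly the defining formula (\re{qk}); in physical space it is the rigorous meaning of the formal convolution (\re{1.13'}). The remainder
\[
r_t(k)=\fr{\cos 2\om t}2\hat q_0(k)
-\fr{\sin 2\om t}{2\om}\big(\hat q_0(k)\hat P(k)-\hat P(k)\hat q_0(k)\big)
+\fr{\cos 2\om t}{2\om^2}\hat P(k)\hat q_0(k)\hat P(k)
\]
is a sum of terms of the type $e^{\pm 2i\om(k)t}g(k)$, so the corollary reduces to showing that $F^{-1}_{k\to z}[r_t](z)\to0$ for each fixed $z$ as $t\to\infty$.

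For this I would invoke the dispersive (stationary--phase) decay associated with the Klein--Gordon phase $\om$. Each amplitude $g$ is built from $\hat q_0$ and the multipliers $\hat P(k)/\om(k)$, which are bounded since $|\hat P(k)|\le C\om(k)$; hence by Lemma~\re{qq} (which gives $\hat q_0\in L^2$, and $\hat q_0\in C_0$ from $q_0\in L^1$) every $g$ is bounded, continuous and square--integrable. The phase $k\mapsto k\cdot z\pm2\om(k)t$ has gradient $z\pm2t\,k/\om(k)$, which vanishes only near the single point $k=0$, where the Hessian of $\om$ is the nondegenerate matrix $m^{-1}I$ because $\om(k)=m+|k|^2/(2m)+O(|k|^4)$. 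Away from the origin, repeated integration by parts in $k$ produces arbitrary negative powers of $t$, while a neighbourhood of the stationary point contributes the standard $O(t^{-3/2})$; both vanish as $t\to\infty$.

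The main obstacle is that the amplitudes $g$ are only controlled in $L^2\cap C_0$ rather than in $L^1$, so the oscillatory integral is not a priori absolutely convergent and the stationary--phase estimate is not literally applicable to $g$ itself, nor is the integration by parts (which would need $\hat q_0$ differentiable, i.e.\ more decay of $q_0$ than the hypotheses supply). The clean way around this, and the form in which the statement is actually used in the proof of Theorem~\re{tA}, is to test against $\hat\phi$ with $\phi\in{\cal D}$: the integrals $\int e^{\pm2i\om(k)t}g(k)\,\ov{\hat\phi(k)}\,dk$ now have genuine $L^1$ amplitudes $g\,\ov{\hat\phi}$ and tend to zero by the Riemann--Lebesgue lemma, so that $(q_t-q_\infty,\phi)\to0$ for every test function. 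Upgrading this to the literal pointwise statement is a technical point, handled by regularizing $\hat q_0$ with a cut--off $\chi_R(k)$, applying the stationary--phase decay to the smooth, compactly supported piece, and letting $R\to\infty$; the high--frequency tail is the only place where more than $\hat q_0\in L^2\cap C_0$ would be required. The case $t\to-\infty$ is identical, in accordance with time--reversibility.
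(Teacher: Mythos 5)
Your decomposition of $\hat q_t$ into the stationary part $\hat q_\infty$ plus an oscillatory remainder is exactly the paper's route: its entire proof is the single sentence that the integrals with the oscillatory functions converge to zero, so in substance you follow the same path while supplying the justification the paper omits. The integrability obstacle you flag is real, but it has a standard resolution that makes the stationary-phase analysis and the cut-off argument unnecessary. Since $q_0$ is the correlation matrix of a translation-invariant measure, $\hat q_0(k)$ is a nonnegative-definite matrix for almost every $k$ (Bochner), so $|\hat q_0^{ij}(k)|\le \tr \hat q_0(k)$, and condition {\bf S2} gives $\int \tr \hat q_0(k)\,dk=(2\pi)^3 e_0<\infty$; hence $\hat q_0\in L^1(\R^3)$, not merely $L^2\cap C_0$. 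Because $|\hat P(k)|\le C\om(k)$, every amplitude in your remainder $r_t$ then lies in $L^1$, and the pointwise convergence follows from the Riemann--Lebesgue lemma alone: passing to spherical coordinates, $\int e^{\pm 2i\om(k)t}e^{-ik\cdot z}g(k)\,dk=\int_m^\infty e^{\pm 2i\om t}G_z(\om)\,d\om$ with $G_z\in L^1(m,\infty)$, which tends to zero. This needs no smoothness of $\hat q_0$, no nondegeneracy of the Hessian of $\om$, and no regularization, and it yields the literal pointwise statement rather than only the weak one. Your weak formulation (testing against $\phi\in{\cal D}$) is correct and is indeed all that the proof of Theorem \re{tA} ultimately uses, but as written your argument leaves the pointwise claim with an acknowledged gap that the positivity of $\hat q_0$ closes.
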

\begin{proof}
The convergence follows from (\re{Qt-rep}) since the integrals
with the oscillatory functions converge to zero.
\end{proof}
Below we will need the following  lemma:
\begin{lemma}\la{p1}
Let  Conditions {\bf S0--S3} hold.
Then for any $\si>3/2$ the bound holds
\be\la{bpp}
\sup\limits_{t\ge 0} E\Vert\chi(\cdot,t)\Vert^2_{L^2_{-\si}}
<\infty
\ee
\end{lemma}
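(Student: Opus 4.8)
The goal is to bound $E\Vert\chi(\cdot,t)\Vert^2_{L^2_{-\si}}$ uniformly in $t\ge 0$, where $\chi(t)=U_0(t)\psi_0$ solves the free equation. The plan is to express this expectation in terms of the correlation function $q_t$ and then to show that the resulting integral is bounded by a quantity involving $\varphi^{1/2}$ that is already under control. First I would write out the weighted $L^2$ norm explicitly: by definition
$$
E\Vert\chi(\cdot,t)\Vert^2_{L^2_{-\si}}
=\int_{\R^3}\langle x\rangle^{-2\si}\,E|\chi(x,t)|^2\,dx
=\int_{\R^3}\langle x\rangle^{-2\si}\,\tr q_t(0)\,dx.
$$
Here $E|\chi(x,t)|^2=\tr q_t(0)$ by the definition \eqref{qt} of the correlation function, together with the identification $\C^4\equiv\R^8$. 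Since $\langle x\rangle^{-2\si}$ is integrable precisely when $2\si>3$, i.e. $\si>3/2$, which is exactly the hypothesis, the whole bound reduces to controlling $\tr q_t(0)$ uniformly in $t$.

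The key step is therefore to estimate $\tr q_t(0)$. From the explicit representation \eqref{Qt-rep} in Lemma \ref{cor-f}, $q_t(0)$ is an inverse Fourier transform in $k$ of an expression built from $\hat q_0(k)$, $\hat P(k)$, and bounded trigonometric factors $\cos 2\om t$, $\sin 2\om t/\om$, $(1-\cos 2\om t)/\om^2$. The trigonometric factors are all bounded uniformly in $t$ and $k$ (note $|\sin 2\om t|/\om\le C$ and $(1-\cos 2\om t)/\om^2\le C$ since $\om\ge m>0$). Thus $\tr q_t(0)$ is bounded by an integral in $k$ of $|\hat q_0(k)|$ times bounded symbols; alternatively, and more cleanly, I would bound $\tr q_t(0)$ directly in physical space. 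Indeed, charge-type conservation for the free Dirac flow gives pointwise control: $|\chi(x,t)|^2$ can be written via the convolution \eqref{xi-sol}, and taking expectations produces a convolution of $q_0$ against the (matrix) kernel ${\cal G}_t\otimes{\cal G}_t$. The diagonal value $\tr q_t(0)$ is then a spatial average of $q_0$ over the light cone $|z|\le 2t$, so $|\tr q_t(0)|\le C\int_{|z|\le 2t}|\text{(kernel)}(z)|\,|q_0(z)|\,dz$.

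The main obstacle is to verify that this last integral stays bounded as $t\to\infty$, since the light cone expands. The decisive input is the mixing bound $|q_0^{ij}(z)|\le Ce_0\,\varphi^{1/2}(|z|)$ from the proof of Lemma \ref{qq}, which forces $q_0$ to decay as $|z|\to\infty$. Combined with the $L^1$ integrability of the fundamental-solution kernel (which is supported on the cone and has integrable singularities in $3$D), the convolution value at the diagonal is bounded by $\int_0^\infty r^2\varphi^{1/2}(r)\,dr$, finite by \eqref{1.12}. I would then conclude $\sup_{t\ge 0}\tr q_t(0)<\infty$, and feeding this into the displayed reduction above yields the claimed uniform bound \eqref{bpp}. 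The subtle point to handle with care is the interchange of expectation and the $x$-integration (justified by Fubini once finiteness is established) and the uniform-in-$t$ boundedness of the trigonometric symbols, which is where the strict positivity $m>0$ enters.
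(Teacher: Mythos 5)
Your opening reduction is exactly the paper's: write $E\Vert\chi(\cdot,t)\Vert^2_{L^2_{-\si}}=\tr q_t(0)\int\langle x\rangle^{-2\si}dx$ (using {\bf S1} to see that $E|\chi(x,t)|^2$ is independent of $x$), note that the weight is integrable precisely for $\si>3/2$, and so reduce everything to a uniform-in-$t$ bound on $\tr q_t(0)$. Your first suggestion for that bound --- pass to Fourier variables via (\re{Qt-rep}) and use that the symbols $\cos\om t$, $\sin 2\om t/\om$, $P(k)\cdot P(k)/\om^2$ are bounded uniformly in $t$ and $k$ because $\om\ge m>0$ and $|\hat P(k)|\le C\om(k)$ --- is precisely what the paper does; to close it one only needs $\int\Vert\hat q_0(k)\Vert\,dk<\infty$, which follows from Bochner positivity of the covariance ($\hat q_0(k)\ge 0$, hence $\Vert\hat q_0(k)\Vert\le\tr\hat q_0(k)$) together with $\int\tr\hat q_0(k)\,dk=(2\pi)^3 e_0<\infty$ from {\bf S2}. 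This yields the paper's conclusion $e_t=\tr q_t(0)\le Ce_0$. You state this route but do not finish it.

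The route you actually commit to --- the ``cleaner'' physical-space convolution --- has a genuine gap. The kernel ${\cal G}_t=(\pa_t-P){\cal E}_t$ is not an $L^1$ function with ``integrable singularities'': in $3$D, ${\cal E}_t$ contains the surface measure $\delta(|x|-t)/(4\pi t)$, so ${\cal G}_t$ contains first-order derivatives of a delta shell and is a distribution of positive order supported on $|x|\le t$. Consequently the bound $|\tr q_t(0)|\le C\int|{\rm kernel}(z)||q_0(z)|dz$ is not available; pairing the singular parts of ${\cal G}_t\otimes{\cal G}_t$ against $q_0$ forces an integration by parts that lands on derivatives of $q_0$, and Conditions {\bf S0--S3} give no control whatsoever on $\nabla q_0$ (only on $|q_0|$ via $\varphi^{1/2}$). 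This is exactly why the paper stays in Fourier space, where the propagator acts by the uniformly bounded matrix symbol $\cos\om t-\hat P(k)\sin\om t/\om$. Note also that the bound you aim for, $\tr q_t(0)\le C\int_0^\infty r^2\varphi^{1/2}(r)dr$, invokes the mixing condition {\bf S3}, whereas the correct estimate $\tr q_t(0)\le Ce_0$ needs only {\bf S2}; the mismatch is a symptom of the detour. Deleting the physical-space paragraph and completing the Fourier argument as above gives a correct proof identical to the paper's.
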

\begin{proof}
Denote
$$
e_t(x):=E|\chi(x,t)|^2,\quad x\in\R^3.
$$
The mathematical expectation is finite for almost all $x\in\R^3$
by (\re{bm}) with $s=-\si$ and the Fubini theorem.
Moreover, $e_t(x)=e_t$ for almost all $x\in\R^3$ by $\bf S1$.
Formula (\re{Qt-rep}) implies
\beqn
q_t(0)&=&\fr 1{(2\pi)^3}\int\Big[\cos^2(\om t)\hat q_0(k)
-\fr{\sin 2\om t}{2\om}(\hat q_0(k)P(k)-P(k)\hat q_0(k))\\
\nonumber
&-&\fr{\sin^2\om t}{\om^2}P(k)\hat q_0(k)P(k)\Big]dk,
\eeqn
Then $e_t=\tr q_t(0)\le Ce_0$.
Hence for $\si>3/2$ we obtain
$$
E\Vert \chi(\cdot,t)\Vert^2_{L^2_{-\si}}
=e_t\int(1+|x|^2)^{-\si}dx\le C(\nu)e_0
$$
and then (\re{bpp}) follows.
\end{proof}
We will use also the following result:
\begin{pro}\la{p2}(see \ci[Proposition 2.8]{DKM}, \ci[Proposition 3.3]{DKKS}).
Let   Conditions {\bf S0--S3} hold. Then
for any $\phi\in {\cal D}$,
\be\la{conv}
E\exp\{i\langle U_0(t)\psi_0,\phi\rangle\}
 \rightarrow
\exp\{-\fr{1}{2}{\cal Q}_\infty(\phi,\phi)\},
 \,\,\,t\to\infty.
 \ee
\end{pro}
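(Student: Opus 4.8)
The plan is to read (\re{conv}) as a central limit theorem for the real random variable $\langle U_0(t)\psi_0,\phi\rangle=\langle\psi_0,\Phi_t\rangle$, where $\Phi_t:=U_0'(t)\phi$, and to establish it by the Bernstein ``room--corridor'' partition, following \ci[Proposition 2.8]{DKM} and \ci[Proposition 3.3]{DKKS}. First I would pin down the candidate limiting variance. Writing the pairing in physical space as $\langle\psi_0,\Phi_t\rangle=\int_{\R^3}{\cal R}\psi_0(x)\cdot{\cal R}\Phi_t(x)\,dx$ and using the translation invariance {\bf S1}, one obtains $E\langle U_0(t)\psi_0,\phi\rangle^2={\cal Q}_t(\phi,\phi)$, the quadratic form built from the correlation function $q_t$ of Lemma \ref{cor-f} exactly as ${\cal Q}_\infty$ is built from $q_\infty$. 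By Lemma \ref{qq} the matrix $\hat q_0$ is bounded, and then (\re{Qt-rep}), in which the matrix factors $P(k)/\om(k)$ are bounded (since $|P(k)|\le C\om(k)$), shows that $\hat q_t(k)$ is bounded uniformly in $t$; since $\hat\phi$ is Schwartz, the pointwise limit $\hat q_t(k)\to\hat q_\infty(k)$ (the Corollary following Lemma \ref{cor-f}) upgrades by dominated convergence to ${\cal Q}_t(\phi,\phi)\to{\cal Q}_\infty(\phi,\phi)$. This is precisely the variance on the right-hand side of (\re{conv}).

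The heart of the matter is asymptotic normality. Because $\phi$ has compact support and the free flow propagates with unit speed, $\Phi_t$ is supported in a ball of radius $\sim t$; by dispersive spreading its local amplitude tends to zero (at the rate $t^{-3/2}$) while $\Vert\Phi_t\Vert_{L^2}=\Vert\phi\Vert_{L^2}$ is conserved by charge conservation. I would then partition $\R^3$ into a regular array of cubic ``rooms'' of side $a_t$ separated by ``corridors'' of width $\rho_t$, set $\xi_n:=\int_{C_n}{\cal R}\psi_0(x)\cdot{\cal R}\Phi_t(x)\,dx$ over the $n$-th room, and write $\langle\psi_0,\Phi_t\rangle=\sum_{n=1}^{N_t}\xi_n+\eta_t$ with $\eta_t$ the corridor remainder. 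The scales are to be tuned so that simultaneously $\rho_t\to\infty$ (so that distinct rooms decouple through mixing), $a_t=o(t)$ (so that each room carries only a vanishing share of the mass, i.e. $\max_n E\xi_n^2\to0$), the surviving room variances still obey $\sum_n E\xi_n^2\to{\cal Q}_\infty(\phi,\phi)$, and $E\eta_t^2\to0$. All these mean-square estimates rest on the covariance bound $|q_0^{ij}(z)|\le Ce_0\varphi^{1/2}(|z|)$ from the proof of Lemma \ref{qq}, the integrability $q_0\in L^1$, and the uniform bound of Lemma \ref{p1}.

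With such a partition fixed I would run the standard two-step argument. The Ibragimov--Linnik bound (\re{7}) lets one replace $E\exp\{i\sum_n\xi_n\}$ by the product $\prod_n E\exp\{i\xi_n\}$ up to an error of order $N_t\varphi(\rho_t)$, since the characteristic factors are bounded by $1$ and the rooms are separated by corridors of width $\rho_t$. Each factor is then expanded as $E\exp\{i\xi_n\}=1-\fr12 E\xi_n^2+o(E\xi_n^2)$, which is legitimate precisely because $\max_n E\xi_n^2\to0$; taking the product and using $\sum_n E\xi_n^2\to{\cal Q}_\infty(\phi,\phi)$ gives $\prod_n E\exp\{i\xi_n\}\to\exp\{-\fr12{\cal Q}_\infty(\phi,\phi)\}$. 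Combining the two steps with $E\eta_t^2\to0$ yields (\re{conv}).

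The main obstacle is the simultaneous balancing of the two scales $a_t$ and $\rho_t$: the corridors must widen fast enough ($\rho_t\to\infty$) for mixing to decouple the rooms and to force the decoupling error $N_t\varphi(\rho_t)$ to zero, yet stay thin relative to the rooms so that the discarded corridor variance $E\eta_t^2$ vanishes, all while the room variances continue to reproduce the full limit ${\cal Q}_\infty(\phi,\phi)$. It is exactly the mixing-rate hypothesis (\re{1.12}), $\int_0^\infty r^2\varphi^{1/2}(r)\,dr<\infty$, that makes a compatible choice possible, and carrying out this bookkeeping is where the arguments of \ci{DKKS} and \ci{DKM} are invoked.
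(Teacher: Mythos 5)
Your proposal is correct and follows essentially the same route as the paper, which proves this proposition only by reference to \cite{DKM} and \cite{DKKS}: duality $\langle U_0(t)\psi_0,\phi\rangle=\langle\psi_0,U_0'(t)\phi\rangle$, convergence of the variance ${\cal Q}_t(\phi,\phi)\to{\cal Q}_\infty(\phi,\phi)$ via Lemma \ref{cor-f}, and asymptotic normality by the Bernstein room--corridor method under the Ibragimov--Linnik mixing rate (\ref{1.12}). The only point you gloss over is that the expansion $E e^{i\xi_n}=1-\frac12 E\xi_n^2+o(E\xi_n^2)$ with a summable error requires a Lindeberg-type condition (not just $\max_n E\xi_n^2\to 0$), but this is exactly the bookkeeping carried out in the cited references that you invoke.
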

\begin{remark}
In \ci{DKM} the phase space $L^2_{loc}(\R^3)\otimes\C^4$
has been considered. Nevertheless, all the steps of proving
the convergence (\re{conv}) in \ci{DKM}
remain true if we change  $L^2_{loc}(\R^3)\otimes\C^4$ by $L^2_{-\si}$ with
 $\si>3/2$.
\end{remark}
\setcounter{equation}{0}
\section{Perturbed Dirac equation.}
\subsection{Scattering Theory}
To deduce Theorem \re{tA}
we construct the dual scattering theory (\ref{dsti})
for finite energy solutions
using the Boussaid  results, \ci{Bo}.
\begin{lemma}\la{Bous} (see \ci[Theorem 1.1]{Bo})
Let conditions {\bf E1-E2} hold and $\si>5/2$. Then the bound holds
\be\la{full}
\Vert P_cU(t)\psi\Vert_{L^2_{-\si}}
\le C(1+|t|)^{-3/2}\Vert \psi\Vert_{L^2_{\si}},\quad t\in\R.
\ee
\end{lemma}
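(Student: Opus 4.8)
The plan is to derive the weighted decay estimate (\re{full}) from the stationary spectral representation of the Dirac propagator, combined with a limiting absorption principle; by time-reversal it suffices to treat $t\ge 0$. First I would express $P_cU(t)$ through Stone's formula as an integral over the continuous spectrum $\si_c(H)=(-\infty,-m]\cup[m,+\infty)$,
$$
P_cU(t)=\fr{1}{2\pi i}\int_{\si_c(H)}e^{-i\lam t}\big[R(\lam+i0)-R(\lam-i0)\big]\,d\lam,\qquad R(z)=(H-z)^{-1}.
$$
The oscillatory factor $e^{-i\lam t}$ supplies the decay in $t$, which one extracts by integrating by parts in $\lam$: the number of admissible integrations, together with the behaviour of the spectral density $\lam\mapsto R(\lam+i0)-R(\lam-i0)$ as an operator $L^2_\si\to L^2_{-\si}$, determines the rate.

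Second, I would establish the limiting absorption principle for $H$ on these weighted spaces. Since $H_0^2=-\De+m^2$, the free resolvent factorises as $R_0(z)=(H_0+z)(-\De+m^2-z^2)^{-1}$, so its boundary values $R_0(\lam\pm i0):L^2_\si\to L^2_{-\si}$ exist and are smooth in $\lam$ away from the thresholds $\pm m$ for $\si>1/2$, inherited from the classical estimates for the Klein-Gordon resolvent. The perturbed resolvent is then recovered from the identity
$$
R(\lam\pm i0)=\big(1+R_0(\lam\pm i0)V\big)^{-1}R_0(\lam\pm i0).
$$
Condition {\bf E1} with $\rho>5$ makes $V$ a compact perturbation between the relevant weighted spaces (indeed $\rho>5$ is what renders $V:L^2_{-\si}\to L^2_\si$ bounded up to $\si$ of order $5/2$), so $1+R_0(\lam\pm i0)V$ is Fredholm of index zero; its invertibility on the open continuous spectrum amounts to the absence of embedded eigenvalues, while Condition {\bf E2} is exactly what preserves invertibility and the required regularity up to and across the thresholds.

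The decisive, and hardest, step is the threshold analysis near $\lam=\pm m$, which fixes the exponent $-3/2$. There $R_0(\lam\pm i0)$ develops a singular expansion in half-integer powers of the distance to the threshold, reflecting the square-root branch point of the three-dimensional Klein-Gordon resolvent; without {\bf E2} the factor $(1+R_0V)^{-1}$ would blow up and destroy the decay. I would expand $R(\lam\pm i0)$ about each threshold, isolate the leading non-smooth term --- of the form $\sqrt{\lam\mp m}$ times a bounded operator $L^2_\si\to L^2_{-\si}$ --- and estimate its contribution by the model integral $\int e^{-i\lam t}\sqrt{\lam\mp m}\,\chi(\lam)\,d\lam=O(t^{-3/2})$, while the smoother remainder and the high-energy part decay faster under repeated integration by parts. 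The weight $\si>5/2$ enters precisely here, to absorb the extra factors of $|x-y|$ produced by the low-energy expansion of the kernel and to secure absolute convergence of all the resulting integrals.
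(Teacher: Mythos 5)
You should first note that the paper itself offers no proof of this lemma: the statement is imported wholesale from Boussaid, \ci[Theorem 1.1]{Bo}, and the citation is the entire argument. So the comparison here is between your outline and the proof in the cited reference, not anything in the present text. Your architecture --- Stone's formula over the continuous spectrum, limiting absorption for $R_0(\lam\pm i0)$ obtained from the Klein--Gordon resolvent via the factorisation $H_0^2=-\De+m^2$, Fredholm inversion of $1+R_0(\lam\pm i0)V$ with condition {\bf E2} controlling the thresholds, and a threshold expansion whose leading $\sqrt{\lam\mp m}$ term yields the model integral of size $t^{-3/2}$ --- is indeed the standard Jensen--Kato-type route that the cited proof follows, so the strategy is the right one.

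As a proof, however, this is an outline in which each step is itself a theorem, and the two hardest points are only named, not established. First, the high-energy regime: for the Dirac operator the gain from the Klein--Gordon resolvent, $\Vert R_{KG}(\lam^2-m^2)\Vert_{L^2_\si\to L^2_{-\si}}=O(|\lam|^{-1})$, is essentially cancelled by the first-order factor $H_0+\lam$ in your factorisation, so $\Vert R_0(\lam\pm i0)\Vert_{L^2_\si\to L^2_{-\si}}$ does \emph{not} decay at infinity and ``repeated integration by parts'' does not come for free; one needs uniform high-energy bounds on the $\lam$-derivatives of the full resolvent (this is where the smoothness of $V$ in {\bf E1} is really consumed), and establishing them is a substantial part of \ci{Bo}. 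Second, invertibility of $1+R_0(\lam\pm i0)V$ on the open continuous spectrum is not merely ``absence of embedded eigenvalues'': a nonzero solution of $(1+R_0(\lam\pm i0)V)\phi=0$ lives a priori only in $L^2_{-\si}$, and one must run an Agmon-type bootstrap to show it is a genuine $L^2$ eigenfunction before excluding it, a step you do not address. Finally, the threshold expansion itself --- identifying the coefficient of $\sqrt{\lam\mp m}$ and verifying that {\bf E2} kills the more singular $(\lam\mp m)^{-1/2}$ and $(\lam\mp m)^{-1}$ terms that a threshold resonance or eigenvalue would generate --- is asserted rather than carried out, and it is precisely this computation that fixes the exponent $3/2$. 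None of these are wrong turns, but until they are executed your proposal is a correct plan rather than a proof; the paper's decision to cite \ci{Bo} instead of proving the bound reflects the fact that filling in these steps is a paper-length task.
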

Note that for $\psi_0\in L^2$ the solutions $U_0(t)\psi_0$ and
$U(t)\psi_0$ to problems (\ref{fD})
and (\ref{1}),  respectively, also belong to $L^2$
and the  charge conservation holds:
\be\la{60}
\Vert U(t)\psi_0\Vert=\Vert\psi_0\Vert,~~
\Vert U_0(t)\psi_0\Vert=\Vert\psi_0\Vert.
\ee
Here and below $\Vert\cdot\Vert$ is the norm in $L^2$.

For  $t\in\R$,  introduce
the operators $U'_0(t)$ and $U'(t)$
which are  conjugate to the operators $U_0(t)$ and  $U(t)$ on $L^2$:
\be\la{def}
(\psi,U'_0(t)\phi)=
(U_0(t)\psi,\phi),\quad
(\psi,U'(t)\phi)=(U(t)\psi,\phi),\quad\psi,\phi\in L^2.
\ee
Here $(\cdot,\cdot)$ stands for the hermitian scalar product
in $L^2(\R^3,\C^4)$.
The adjoint groups admit a convenient description:
\begin{lemma}\la{8.3}
For $\phi\in L^2$ the function
$U'_0(t)\phi_0=\phi(\cdot,t)$  is the solution to
\be\la{UP0}
\dot\phi(x,t)=[\alpha\cdot\nabla+i\beta m]\phi(x,t),~~
\phi(x,0)=\phi_0(x).
\ee
\end{lemma}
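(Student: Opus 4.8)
The plan is to recognize that the conjugate operator $U_0'(t)$ of (\re{def}) is simply the Hilbert-space adjoint $U_0(t)^*$ with respect to the hermitian scalar product in $L^2(\R^3,\C^4)$, and that this adjoint can be computed explicitly because $U_0(t)$ is a \emph{unitary} group. Indeed, by the charge conservation (\re{60}) the group $U_0(t)$ is isometric, and being an invertible one-parameter group (with inverse $U_0(-t)$) it is therefore unitary; hence
\be
U_0'(t)=U_0(t)^*=U_0(t)^{-1}=U_0(-t),\qquad t\in\R.
\ee
Thus $U_0'(t)\phi_0=U_0(-t)\phi_0$ is just the free Dirac evolution run backwards, and it remains only to read off the Cauchy problem it solves.

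First I would differentiate $\phi(t):=U_0(-t)\phi_0$ in $t$. Since $U_0(s)=e^{sH_0}$ is generated by $H_0=-\alpha\cdot\nabla-i\beta m$ (cf. (\re{fD})), the chain rule gives $\dot\phi(t)=-H_0U_0(-t)\phi_0=-H_0\phi(t)$, together with $\phi(0)=\phi_0$. It then suffices to note that
\be
-H_0=\alpha\cdot\nabla+i\beta m,
\ee
which is precisely the generator appearing in (\re{UP0}). Equivalently, one may compute the adjoint generator directly: the Dirac matrices are hermitian, $\alpha_k^*=\alpha_k$ and $\beta^*=\beta$, while each $\partial_k$ is antisymmetric on $L^2$ (integration by parts, with no boundary contribution), so that $(-\alpha\cdot\nabla)^*=\alpha\cdot\nabla$ and $(-i\beta m)^*=i\beta m$; adding these yields $H_0^*=\alpha\cdot\nabla+i\beta m=-H_0$, which both confirms the unitarity used above and identifies the generator of $U_0'(t)=e^{tH_0^*}$.

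To justify the differentiation for a general datum $\phi_0\in L^2$ without dwelling on operator domains, I would run the whole computation on the Fourier side, exactly as in the proof of Proposition \re{p1.1}: there $U_0(t)$ acts as multiplication by a matrix symbol with hermitian exponent of the form $(\alpha\cdot k-\beta m)t$, so the symbol is unitary and the adjoint group is multiplication by the conjugate-transpose symbol; differentiating the symbol in $t$ at fixed $k$ is elementary, and transforming back returns the differential operator $\alpha\cdot\nabla+i\beta m$. The equation (\re{UP0}) is then understood in the strong sense in $C(\R,L^2_{-\si})$, or distributionally, and the extension from a dense domain to all of $L^2$ is routine. I do not expect any genuine analytic obstacle here; the only point demanding care is the bookkeeping of signs and of the Fourier sign convention fixed in (\re{qk}) (under which $\Lambda\cdot\nabla$ has symbol $-i\Lambda\cdot k$), so that the generator of the conjugate flow comes out as $-H_0$ rather than $+H_0$.
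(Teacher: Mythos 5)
Your proposal is correct and follows essentially the same route as the paper: the paper likewise differentiates the duality relation $(\psi,U_0'(t)\phi)=(U_0(t)\psi,\phi)$ and identifies the generator of $U_0'(t)$ as the conjugate of ${\cal A}_0=-\alpha\cdot\nabla-i\beta m$, namely ${\cal A}_0'=\alpha\cdot\nabla+i\beta m$. Your additional observations --- that skew-adjointness of ${\cal A}_0$ makes $U_0(t)$ unitary so that $U_0'(t)=U_0(-t)$, and the Fourier-side justification --- are consistent with the paper (the first is exactly Corollary \ref{co7.1}(i)) and merely supply detail the paper leaves implicit.
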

\begin{proof}
Differentiating the first equation of (\ref{def}) with
$\psi,\phi\in {\cal D}$, we obtain
\be\la{UY}
(\psi,\dot U'_0(t)\phi)=(\dot U_0(t)\psi,\phi).
\ee
The group $U_0(t)$ has the generator
\be\la{A0}
  {\cal A}_0=-\alpha\cdot\nabla-i\beta m.
\ee
Therefore,
the generator of $U'_0(t)$   is the conjugate operator
\be\la{A'0}
  {\cal A}'_0=\alpha\cdot\nabla+i\beta m.
\ee
Hence, (\ref{UP0}) holds, where
$\dot \phi(t)={\cal A}'_0\phi(t).$
\end{proof}
Similarly, we obtain
\begin{lemma}\la{8.4}
For $\phi\in L^2$ the function
$U'(t)\phi=\phi(x,t)$  is the solution to
\be\la{UP}
\dot\phi(x,t)=[\alpha\cdot\nabla+i\beta m+iV]\phi(x,t),~~
\phi(x,0)=\phi(x).
\ee
\end{lemma}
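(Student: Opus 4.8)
The plan is to mirror the proof of Lemma~\re{8.3}, simply carrying the extra potential term through the adjoint computation. First I would differentiate the defining relation for $U'(t)$ in (\re{def}) with respect to $t$, taking $\psi,\phi\in{\cal D}$ so that every operation is justified on a common dense invariant core; this yields $(\psi,\dot U'(t)\phi)=(\dot U(t)\psi,\phi)$, the exact analogue of (\re{UY}).

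Next I would read off the generator of $U(t)$ from the Cauchy problem (\re{1}). Since $i\dot\psi=H\psi$ with $H=-i\al\cdot\nabla+\beta m+V$, the group $U(t)$ has generator ${\cal A}=-iH=-\al\cdot\nabla-i\beta m-iV$, which extends (\re{A0}) by the single term $-iV$. The generator of the conjugate group $U'(t)$ is therefore the conjugate operator ${\cal A}'={\cal A}^*$, exactly as in the passage from (\re{A0}) to (\re{A'0}).

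Then I would compute ${\cal A}^*$ termwise. Using that $\al_k$ and $\beta$ are hermitian and that $\nabla^*=-\nabla$ (integration by parts, with no boundary terms on ${\cal D}$), one gets $(-\al\cdot\nabla)^*=\al\cdot\nabla$ and $(-i\beta m)^*=i\beta m$, reproducing (\re{A'0}). The only new ingredient is the potential term: since $V(x)$ is a hermitian matrix by Condition~{\bf E1}, $(-iV)^*=iV^*=iV$. Combining these, ${\cal A}'={\cal A}^*=\al\cdot\nabla+i\beta m+iV$, so $\phi(t)=U'(t)\phi$ solves $\dot\phi(t)={\cal A}'\phi(t)$, which is precisely (\re{UP}).

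There is no serious obstacle here; the argument is a routine transcription of Lemma~\re{8.3}. The only point requiring attention is the sign and hermiticity bookkeeping for the potential term --- it is the hermiticity of $V$ guaranteed by {\bf E1} that produces $+iV$ rather than $-iV$ in the adjoint generator --- together with the standard justification that the formal computation on the core ${\cal D}$ extends to all $\phi\in L^2$ by density and by the continuity of $U(t)$ and $U'(t)$ on $L^2$ furnished by the charge conservation (\re{60}).
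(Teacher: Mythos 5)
Your proof is correct and follows exactly the route the paper intends: the paper gives no separate proof of Lemma \re{8.4}, stating only ``Similarly, we obtain'' after Lemma \re{8.3}, and your argument is precisely that transcription, with the sign and hermiticity bookkeeping for the $-iV$ term done correctly (hermiticity of $V$ from {\bf E1} giving $(-iV)^*=iV$). Nothing further is needed.
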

\begin{cor}\la{co7.1}
i) $U'_0(t)=U_0(-t)$, $U'(t)=U(-t)$.\\
ii) For any $\phi\in L^2$ the uniform bounds hold:
\be\la{6.4}
\Vert U'_0(t)\phi\Vert =\Vert \phi\Vert,\quad
\Vert U'(t)\phi\Vert =\Vert \phi\Vert,\quad t\ge 0.
\ee
iii) Under assumptions {\bf E1}-{\bf E2}
for $U'(t)$ a bound of type (\re{full}) also holds:
\be\la{full1}
\Vert P_cU'(t)\psi\Vert_{L^2_{-\si}}
\le C(1+|t|)^{-3/2}\Vert \psi\Vert_{L^2_{\si}},\quad t\in\R
\ee
with  $\si>5/2$.
\end{cor}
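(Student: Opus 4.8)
The plan is to reduce everything to part~i), which identifies the adjoint groups with the backward-time propagators; once this is established, parts~ii) and~iii) are immediate. For part~i) I would give the quickest argument via unitarity: the charge conservation identities~(\re{60}) show that $U_0(t)$ and $U(t)$ are isometries on $L^2$, and since each is a one-parameter group, hence invertible with inverse $U(-t)$, each is in fact unitary. Therefore the Hilbert-space adjoint coincides with the inverse, so that $U'_0(t)=U_0(t)^*=U_0(-t)$ and $U'(t)=U(t)^*=U(-t)$. Consistently with Lemmas~\re{8.3} and~\re{8.4}, the same conclusion can be reached by matching infinitesimal generators: the generator $\al\cdot\nabla+i\beta m$ of $U'_0(t)$ is exactly $-{\cal A}_0$, the generator of $U_0(-t)$, while the generator $\al\cdot\nabla+i\beta m+iV$ of $U'(t)$ is minus the generator of $U(t)$, i.e. the generator of $U(-t)$; since the two families agree at $t=0$ and solve the same first-order equation, uniqueness yields the equalities.

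For part~ii) I would simply combine part~i) with charge conservation: $\Vert U'_0(t)\phi\Vert=\Vert U_0(-t)\phi\Vert=\Vert\phi\Vert$ and $\Vert U'(t)\phi\Vert=\Vert U(-t)\phi\Vert=\Vert\phi\Vert$ by~(\re{60}).

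For part~iii) I would use part~i) to write $P_cU'(t)=P_cU(-t)$ and then invoke the Boussaid decay estimate of Lemma~\re{Bous}, which is already stated for all $t\in\R$. Replacing $t$ by $-t$ in~(\re{full}) and using $|-t|=|t|$ gives $\Vert P_cU'(t)\psi\Vert_{L^2_{-\si}}=\Vert P_cU(-t)\psi\Vert_{L^2_{-\si}}\le C(1+|t|)^{-3/2}\Vert\psi\Vert_{L^2_\si}$, which is precisely~(\re{full1}); note that no commutation of $P_c$ with the dynamics is required, since the Boussaid bound holds for both signs of time.

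The hard part, such as it is, lies entirely in part~i): the substantive content is the passage from the abstract adjoint $U'(t)$ to the concrete time-reversed propagator $U(-t)$. The only point needing care is confirming genuine unitarity---that $U(t)$ is not merely an isometry but invertible---which is guaranteed by the group property. With that in hand, parts~ii) and~iii) are bookkeeping, and the decay rate $(1+|t|)^{-3/2}$ transfers verbatim from~(\re{full}).
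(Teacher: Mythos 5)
Your proposal is correct and follows essentially the same route as the paper: the identification $U'_0(t)=U_0(-t)$, $U'(t)=U(-t)$ via the adjoint/time-reversed generators (as in Lemmas \ref{8.3}--\ref{8.4}), after which ii) is charge conservation (\ref{60}) and iii) is the Boussaid bound (\ref{full}) applied at $-t$. The unitarity shortcut for part i) is a harmless variant of the generator-matching argument the paper relies on.
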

Now we  formulate the scattering
theory  in the dual representation.

\begin{theorem}\la{t6.1}
Let conditions  {\bf E1--E2} and {\bf S0--S3} hold and $\si>5/2$.
Then there exist linear operators
$W,r(t): L^2_{\si}\to  L^2$
such that for $\phi\in L^2_\si$
\be\la{dst}
P_cU'(t)\phi=U'_0(t)W\phi+r(t)\phi, \,\,\,t\ge 0.
\ee
and the bounds hold
\beqn
\Vert r(t)\phi\Vert&\le&
C(1+t)^{-1/2}\Vert\phi\Vert_{L^2_\si},
\la{6.9}\\
E|\langle\psi_0,r(t)\phi\rangle|^2&\le&
C(1+t)^{-1}   \Vert\phi\Vert^2_{L^2_\si},\quad t>0.
\la{6.6}
\eeqn
\end{theorem}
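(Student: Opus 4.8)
The plan is to build $W$ as a wave operator by Cook's method and to take $r(t)$ to be the tail of the associated Duhamel integral. Set $\Omega(t):=U_0'(-t)P_cU'(t)$ for $\phi\in L^2_\si$. Since $H=-i\al\cdot\nabla+\beta m+V$ is self-adjoint, the Riesz projection $P_c$ commutes with $H$, hence with $U'(t)=U(-t)$ and with its generator. By Lemmas \re{8.3} and \re{8.4} the groups $U_0'(t)$ and $U'(t)$ are generated by $\al\cdot\nabla+i\beta m$ and $\al\cdot\nabla+i\beta m+iV$ respectively, so their generators differ precisely by $iV$. Differentiating $\Omega(s)\phi$ on a dense set of smooth $\phi$ and using these two facts, I would obtain
$$
\fr{d}{ds}\Omega(s)\phi=U_0'(-s)\,iV\,P_cU'(s)\phi,
$$
because the two generator contributions combine into the difference $iV$. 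The perturbation enters only through this single term, and the decay of $V$ is exploited here.

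Next I would estimate the $L^2$-norm of this derivative. Put $\tau:=\min(\si,\rho)$; since $\rho>5$ we have $\tau>5/2$, and because $|V(x)|\le C\langle x\rangle^{-\rho}$ the multiplication by $V$ maps $L^2_{-\tau}$ boundedly into $L^2$. As $U_0'(-s)$ is unitary by (\re{6.4}), this gives
$$
\Big\|\fr{d}{ds}\Omega(s)\phi\Big\|=\|VP_cU'(s)\phi\|\le C\|P_cU'(s)\phi\|_{L^2_{-\tau}},
$$
and the dispersive bound (\re{full1}) of Corollary \re{co7.1}, applied at weight $\tau$, yields $\|\fr{d}{ds}\Omega(s)\phi\|\le C(1+s)^{-3/2}\|\phi\|_{L^2_\tau}\le C(1+s)^{-3/2}\|\phi\|_{L^2_\si}$, which is integrable on $[0,\infty)$. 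Hence $\Omega(t)\phi$ converges in $L^2$ as $t\to\infty$, and I define
$$
W\phi:=\lim_{t\to\infty}\Omega(t)\phi=P_c\phi+\int_0^\infty U_0'(-s)\,iV\,P_cU'(s)\phi\,ds,
$$
a bounded operator $L^2_\si\to L^2$. Setting $r(t)\phi:=P_cU'(t)\phi-U_0'(t)W\phi$ and using once more that $U_0'(t)$ is unitary, I get $\|r(t)\phi\|=\|\Omega(t)\phi-W\phi\|=\|\int_t^\infty\fr{d}{ds}\Omega(s)\phi\,ds\|\le C(1+t)^{-1/2}\|\phi\|_{L^2_\si}$, which is (\re{6.9}).

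For the probabilistic bound (\re{6.6}), write $g(s):=U_0'(-s)\,iV\,P_cU'(s)\phi$, so that $r(t)\phi=-\int_t^\infty g(s)\,ds$ and, by linearity of the real pairing (\re{1.5'}) and Fubini,
$$
E|\langle\psi_0,r(t)\phi\rangle|^2=\int_t^\infty\int_t^\infty E\big[\langle\psi_0,g(s)\rangle\langle\psi_0,g(s')\rangle\big]\,ds\,ds'.
$$
The key point is that the covariance form of $\mu_0$ is bounded on $L^2$: exactly as in Corollary \re{coro}, Lemma \re{qq} with $p=1$ gives $q_0\in L^1(\R^3)$, hence $\hat q_0\in L^\infty$, and therefore $|E[\langle\psi_0,f\rangle\langle\psi_0,h\rangle]|\le C\|f\|\,\|h\|$ for $f,h\in L^2$. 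Combining this with the bound $\|g(s)\|\le C(1+s)^{-3/2}\|\phi\|_{L^2_\si}$ established above yields
$$
E|\langle\psi_0,r(t)\phi\rangle|^2\le C\Big(\int_t^\infty(1+s)^{-3/2}\,ds\Big)^2\|\phi\|^2_{L^2_\si}\le C(1+t)^{-1}\|\phi\|^2_{L^2_\si},
$$
which is (\re{6.6}).

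The main obstacle I anticipate is not this algebra but the two analytic inputs on which it rests. First is the dispersive estimate (\re{full1}) for the adjoint group $U'(t)$, obtained in Corollary \re{co7.1} by transferring Boussaid's weighted decay (Lemma \re{Bous}); with it in hand, one must still justify Cook's differentiation on a dense domain and extend the identity to all $\phi\in L^2_\si$ by density together with the uniform bounds (\re{6.4}). Second is the $L^2$-boundedness of the covariance bilinear form, which depends on the integrability $q_0\in L^1$ furnished by the mixing condition {\bf S3} through Lemma \re{qq}, and on a Fubini argument to interchange the expectation with the double time integral.
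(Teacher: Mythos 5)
Your proposal is correct and follows essentially the same route as the paper: Cook's method applied to $U_0'(-t)P_cU'(t)$, with the generator difference $iV$ controlled by the weighted decay (\re{full1}) to define $W$ and to bound the tail $r(t)$, and the covariance estimate $E|\langle\psi_0,f\rangle|^2\le \Vert q_0\Vert_{L^1}\Vert f\Vert^2$ (the paper's ``Schur lemma'' step, resting on Lemma \re{qq} with $p=1$) to obtain (\re{6.6}). Your only departures are cosmetic refinements — writing $P_c\phi$ rather than $\phi$ for the $\tau=0$ term and tracking the weight $\min(\si,\rho)$ explicitly — which, if anything, tighten the paper's bookkeeping.
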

\begin{proof}
 We apply the Cook method, \ci[Theorem XI.4]{RS3}.
Fix $\phi\in L^2_{\si}$ and define $W\phi$,  formally, as
\be\la{W-int}
W\phi=\lim_{t\to+\infty}U'_0(-t)P_cU'(t)\phi=
\phi+\int_{0}^{+\infty}\frac{d}{d\tau}
U'_0(-\tau)P_cU'(\tau)\phi  \,d\tau.
\ee
We have to prove the convergence of the last integral in the norm
of $L^2$. First, observe that
$$
  \frac{d}{d\tau} U'_0(\tau)\phi={\cal A}'_0 U'_0(\tau)\phi,\quad
  \frac{d}{d\tau} U'(\tau)\phi={\cal A}'U'(\tau)\phi,\,\,\tau\ge 0
$$
where
${\cal A}'_0$ and  ${\cal A}'$
are the generators to the groups $U'_0(\tau)$,  $U'(\tau)$,
respectively. Therefore,
\be\la{6.10}
  \frac{d}{d\tau} U'_0(-\tau)P_c U'(\tau)\phi
  =U'_0(-\tau) ({\cal A}'-{\cal A}'_0)P_cU'(\tau)\phi.
\ee
We have
${\cal A}'-{\cal A}'_0 =iV$.
Furthermore, {\bf E2}, (\ref{6.4}), (\ref{full1}) imply that
\beqn\la{6.11}
  \!\!\!\!\!\! \!\!\!\!\!\!
  \Vert U'_0(-\tau) ({\cal A}-{\cal A}_0)P_c U'(\tau)\phi\Vert
  \!\!&\le&\!\!
  C~\Vert({\cal A}-{\cal A}_0)P_c U'(\tau)\phi\Vert
  =C~\Vert VU'(\tau)\phi\Vert\\
  \nonumber
  \!\!&\le&\!\! C_1~\Vert U'(\tau)\phi\Vert_{L^2_{-\rho}}
  \le C_2 (1+\tau)^{-3/2}  \Vert\phi\Vert_{L^2_\si},
  \quad\tau\ge 0.
\eeqn
Hence, the convergence of the integral
in the right hand side of (\re{W-int}) follows.
\smallskip\\
Further, (\re{dst}) and (\re{W-int}) imply
$$
r(t)\phi=P_cU'(t)\phi-U'_0(t)W\phi
=-U'_0(t)\int_t^\infty\frac{d}{d\tau}U'_0(-\tau)P_cU'(\tau)\phi\,d\tau.
$$
Hence (\re{6.9}) follows by  (\ref{6.4}), (\ref{6.10}) and (\ref{6.11}).
\medskip\\
It remains to prove (\re{6.6}).
Applying the Shur lemma we obtain
\beqn\la{6.14}
  E|\langle\psi_0,r(t)\phi\rangle|^2&=&
  \langle q_0(x-y),
  r(t)\phi(x)\otimes r(t)\phi(y)\rangle
  \nonumber\\
  &\le&\Vert q_0\Vert_{L^1}\,\,\Vert r(t)\phi \Vert^2.
\eeqn
Hence,  (\ref{6.6})  follows by (\ref{qp}) with $p=1$ and (\re{6.9}).
\end{proof}

\subsection{Convergence to equilibrium distribution}
 Theorem \re{tA} can be derived from Propositions \re{p9.1}-\re{p9.2}
below by using the same arguments as in \ci[Theorem XII.5.2]{VF}.
\begin{pro}\la{p9.1}
The family of the measures $\{P_c^*\mu_t, t\in\R\}$
is weakly compact in $P_cH^{-\ve}_{-\si}$ for any $\ve>0$
and $\si>5/2+\delta$.
\end{pro}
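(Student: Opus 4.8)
The plan is to deduce weak compactness from Prokhorov's theorem, i.e. to reduce it to uniform tightness of the family $\{P_c^*\mu_t\}$ on the separable Hilbert space $P_cH^{-\ve}_{-\si}$, and to obtain tightness from a uniform-in-time second moment bound in an auxiliary weighted space that embeds compactly into the target. Concretely, I would fix $\si_0$ with $5/2<\si_0<\min(\si,\rho-3/2)$, which is possible since $\si>5/2+\delta$ and $\rho>5$. The embedding $L^2_{-\si_0}=H^0_{-\si_0}\hookrightarrow H^{-\ve}_{-\si}$ is \emph{compact} for every $\ve>0$, because both indices strictly decrease ($0>-\ve$ and $-\si_0>-\si$); this is exactly why positivity of $\ve$ is needed. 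Hence bounded balls of $P_cL^2_{-\si_0}$ are precompact in $P_cH^{-\ve}_{-\si}$, and once I have $\sup_{t\ge0}E\|P_c\psi(t)\|^2_{L^2_{-\si_0}}<\infty$, Chebyshev's inequality localizes the mass of $P_c^*\mu_t$ on such balls uniformly in $t$, giving tightness and thus weak compactness.

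The core of the argument is the uniform second moment bound. I would use the splitting of Proposition \ref{p1.1}, $\psi(t)=\chi(t)+\phi(t)$ with $\chi(t)=U_0(t)\psi_0$ and $\phi(t)=\int_0^tU(t-\tau)V\chi(\tau)\,d\tau$, so that $P_c\psi(t)=P_c\chi(t)+P_c\phi(t)$. Since the eigenfunctions decay rapidly under {\bf E1--E2}, $P_d$ and $P_c=1-P_d$ are bounded on $L^2_{-\si_0}$, whence $E\|P_c\chi(t)\|^2_{L^2_{-\si_0}}\le CE\|\chi(t)\|^2_{L^2_{-\si_0}}$ is uniformly bounded by Lemma \ref{p1} (as $\si_0>5/2>3/2$). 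For the Duhamel part I write $P_c\phi(t)=\int_0^tP_cU(t-\tau)V\chi(\tau)\,d\tau$ and apply the weighted dispersive decay of Lemma \ref{Bous} pathwise, $\|P_cU(t-\tau)V\chi(\tau)\|_{L^2_{-\si_0}}\le C(1+|t-\tau|)^{-3/2}\|V\chi(\tau)\|_{L^2_{\si_0}}$. Because multiplication by $V$ maps $L^2_{-(\rho-\si_0)}$ boundedly into $L^2_{\si_0}$ and $\rho-\si_0>3/2$, Lemma \ref{p1} gives $\sup_\tau E\|V\chi(\tau)\|^2_{L^2_{\si_0}}\le C\sup_\tau E\|\chi(\tau)\|^2_{L^2_{-(\rho-\si_0)}}<\infty$. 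Minkowski's inequality in $L^2(\Om;L^2_{-\si_0})$ then yields
$$
\Big(E\|P_c\phi(t)\|^2_{L^2_{-\si_0}}\Big)^{1/2}\le\int_0^t\Big(E\|P_cU(t-\tau)V\chi(\tau)\|^2_{L^2_{-\si_0}}\Big)^{1/2}d\tau\le C\int_0^\infty(1+s)^{-3/2}ds<\infty
$$
uniformly in $t$. Adding the two contributions establishes $\sup_{t\ge0}E\|P_c\psi(t)\|^2_{L^2_{-\si_0}}<\infty$.

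The step I expect to be the main obstacle, and the reason the projection $P_c$ is indispensable, is obtaining a \emph{uniform} rather than growing bound on the Duhamel term. The crude charge-conservation estimate used for well-posedness in Proposition \ref{p1.1}, namely $\|U(t-\tau)V\chi(\tau)\|_{L^2_{-\si_0}}\le\|V\chi(\tau)\|_{L^2}$, grows linearly after integration and is useless here; only after projecting onto the continuous spectral subspace does the integrable decay $(1+|t-\tau|)^{-3/2}$ of Lemma \ref{Bous} become available, and its exponent exceeding $1$ is precisely what makes the Duhamel integral converge uniformly in $t$. The remaining care is bookkeeping of the weights: one must check that the loss of $\rho$ powers in the source $V\chi$ still leaves enough decay for Lemma \ref{p1}, which is guaranteed by $\rho>5$. (The same uniform variance bound could alternatively be read off from the dual scattering decomposition of Theorem \ref{t6.1} together with the Schur-type estimate $E|\langle\psi_0,g\rangle|^2\le\|q_0\|_{L^1}\|g\|^2$; the direct Duhamel estimate keeps the compactness argument self-contained.)
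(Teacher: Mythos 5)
Your proposal is correct and follows essentially the same route as the paper: the same splitting $\psi(t)=\chi(t)+\phi(t)$, Lemma \ref{p1} for the free part, eigenfunction decay for $P_d$, the dispersive bound of Lemma \ref{Bous} to make the Duhamel integral converge uniformly in $t$, and Prokhorov's theorem to conclude. The only differences are cosmetic: you work with second moments and spell out the compact-embedding tightness argument that the paper delegates to a citation of Vishik--Fursikov.
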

\begin{pro}\la{p9.2}
For any $\phi\in{\cal D}$
\be\la{2.6'}
 \widehat {P_c^*\mu_t}(\phi)
 \equiv\int \exp(i\langle\psi,\phi\rangle)P_c^*\mu_t(d\psi)
 \rightarrow
\exp\{-\fr{1}{2}{\cal Q}_\infty(W\phi,W\phi)\},
 \,\,\,t\to\infty.
 \ee
\end{pro}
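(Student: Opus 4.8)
The plan is to reduce the characteristic functional $\widehat{P_c^*\mu_t}(\phi)$ to the free-dynamics quantity already controlled by Proposition \ref{p2}, using the dual scattering representation (\ref{dst}) and discarding the remainder $r(t)$ via the key bound (\ref{6.6}). First I would rewrite the left-hand side as an expectation over the initial measure. Since $P_c^*\mu_t$ is the push-forward of $\mu_t$ under $P_c$, and $\mu_t$ is the distribution of $U(t)\psi_0$, we have $\widehat{P_c^*\mu_t}(\phi)=E\exp\{i\langle P_cU(t)\psi_0,\phi\rangle\}$. Because $H$ is self-adjoint on $L^2$, its Riesz projection $P_c$ onto the continuous spectrum is an orthogonal projection; hence $P_c$ is self-adjoint for the real pairing $\langle\cdot,\cdot\rangle$ of (\ref{1.5'}) and commutes with both $U(t)$ and $U'(t)=U(-t)$. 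Combining this with the definition (\ref{def}) of the adjoint group gives
\begin{equation}
\langle P_cU(t)\psi_0,\phi\rangle=\langle U(t)\psi_0,P_c\phi\rangle=\langle\psi_0,U'(t)P_c\phi\rangle=\langle\psi_0,P_cU'(t)\phi\rangle.
\end{equation}
Now the dual scattering representation (\ref{dst}) yields $P_cU'(t)\phi=U'_0(t)W\phi+r(t)\phi$, and $\langle\psi_0,U'_0(t)W\phi\rangle=\langle U_0(t)\psi_0,W\phi\rangle$ by (\ref{def}), so that
\begin{equation}
\widehat{P_c^*\mu_t}(\phi)=E\Big[\exp\{i\langle U_0(t)\psi_0,W\phi\rangle\}\exp\{i\langle\psi_0,r(t)\phi\rangle\}\Big].
\end{equation}

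Second, I would discard the remainder. Writing $a_t=\langle U_0(t)\psi_0,W\phi\rangle$ and $b_t=\langle\psi_0,r(t)\phi\rangle$, the elementary inequality $|e^{ia_t}(e^{ib_t}-1)|\le|b_t|$ together with Cauchy-Schwarz and the bound (\ref{6.6}) gives
\begin{equation}
\Big|\widehat{P_c^*\mu_t}(\phi)-E\exp\{i\langle U_0(t)\psi_0,W\phi\rangle\}\Big|\le E|b_t|\le\big(E|b_t|^2\big)^{1/2}\le C(1+t)^{-1/2}\Vert\phi\Vert_{L^2_\si}\to0.
\end{equation}
Thus it suffices to prove that the main term converges to $\exp\{-\frac{1}{2}{\cal Q}_\infty(W\phi,W\phi)\}$.

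Third, and this I expect to be the main obstacle, I would establish $E\exp\{i\langle U_0(t)\psi_0,W\phi\rangle\}\to\exp\{-\frac{1}{2}{\cal Q}_\infty(W\phi,W\phi)\}$. Proposition \ref{p2} gives exactly this, but only for test functions in ${\cal D}$, whereas $W\phi$ merely lies in $L^2$ and need be neither smooth nor compactly supported; the real content is therefore to upgrade Proposition \ref{p2} to $L^2$ test functions. The two ingredients are: (a) the form ${\cal Q}_\infty$ is continuous on $L^2$ by Corollary \ref{coro}, so the right-hand side depends continuously on $W\phi\in L^2$; and (b) a uniform-in-$t$ equicontinuity estimate for the characteristic functionals, obtained from the Shur-lemma argument of (\ref{6.14}) and charge conservation (\ref{60}),
\begin{equation}
E|\langle U_0(t)\psi_0,\eta\rangle|^2=E|\langle\psi_0,U'_0(t)\eta\rangle|^2\le\Vert q_0\Vert_{L^1}\Vert U'_0(t)\eta\Vert^2=\Vert q_0\Vert_{L^1}\Vert\eta\Vert^2,\quad\eta\in L^2,
\end{equation}
with $q_0\in L^1$ by Lemma \ref{qq}. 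Approximating $W\phi$ in $L^2$ by functions $\phi_n\in{\cal D}$, estimate (b) controls the difference of the characteristic functionals at $W\phi$ and $\phi_n$ uniformly in $t$, Proposition \ref{p2} handles each fixed $\phi_n$ as $t\to\infty$, and (a) closes the gap between ${\cal Q}_\infty(\phi_n,\phi_n)$ and ${\cal Q}_\infty(W\phi,W\phi)$; a standard $\varepsilon/3$ argument then yields the convergence (\ref{2.6'}). This passage from smooth, compactly supported test functions to general elements of $L^2$, rather than any single estimate, is where the difficulty lies.
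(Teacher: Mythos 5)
Your proposal is correct and follows essentially the same route as the paper: rewrite $\widehat{P_c^*\mu_t}(\phi)$ as $E\exp\{i\langle\psi_0,P_cU'(t)\phi\rangle\}$, discard the remainder $r(t)\phi$ via (\ref{6.6}) and Cauchy--Schwarz, and then upgrade Proposition \ref{p2} from ${\cal D}$ to $L^2$ test functions by approximating $W\phi$, using the Shur-lemma bound $E|\langle\psi_0,U'_0(t)\eta\rangle|^2\le\Vert q_0\Vert_{L^1}\Vert\eta\Vert^2$ uniformly in $t$ together with the $L^2$-continuity of ${\cal Q}_\infty$ from Corollary \ref{coro}. The paper's proof is exactly this $\varepsilon/3$ argument, so no further comment is needed.
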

Proposition \re{p9.1} provides the existence of the limiting measures
of the family $P_c^*\mu_t$, and Proposition \re{p9.2} provides the
uniqueness of the limiting measure, and hence the convergence (\re{1.8'}).
We deduce these propositions
with the help of Theorem \re{t6.1}.
\medskip\\
{\bf Proof of Proposition \ref{p9.1}.}
First, we prove the bound
\be\la{pest}
\sup\limits_{t\ge0} E\Vert P_cU(t)\psi_0\Vert_{\cH}<\infty,
\ee
Representation (\re{split}) implies
\be\la{7.3}
 P_c U(t)\psi_0=P_c\chi(x,t)+P_c\phi(x,t),
\ee
where $\chi(x,t)= U_0(t)\psi_0$,
and $\phi(x,t)$ is the solution to (\re{vsp}).
Therefore,
\be\la{7.4}
  E\Vert P_cU(t)\psi_0\Vert_{\cH}\le
  E\Vert P_c\chi(t)\Vert_{\cH}
  +E\Vert P_c\phi(t)\Vert_{\cH}.
\ee
Bound  (\ref{bpp}) implies
\be\la{chi-est}
\sup\limits_{t\ge0}E\Vert\chi(t)\Vert_{\cH}
<\infty.
\ee
Further, we have by the Cauchy-Schwartz inequality
$$
E\Vert(\chi(t),\zeta_{j})\zeta_{j}\Vert_{L^2_{-\si}}
\le C\Vert\zeta_{j}\Vert_{L^2_{-\si}}
\Vert\zeta_{j}\Vert_{L^2_{\si}}
E\Vert\chi(t)\Vert_{L^2_{-\si}}
\le C_jE\Vert\chi(t)\Vert_{L^2_{-\si}},\quad\si=5/2+\de
$$
since the eigenfunctions $\zeta_j\in L^2_s$ with any  $s$,
see Appendix.
Therefore
$$
\sup\limits_{t\ge0} E\Vert P_c\chi(t)\Vert_{\cH}
<\infty
$$
since $P_c\chi(x,t)=\chi(x,t)-P_d\chi(x,t)$  by (\re{pc}).

It remains to estimate the second term in the RHS of (\ref{7.4}).
Choose a $\delta_1>0$ such that $\delta_1<\rho-5-\delta$.
It is possible due to {\bf E1}.
Then the  Duhamel representation  (\ref{vsp})
and  bounds (\re{full}) and (\re{chi-est}) imply
\beqn\nonumber
\!\!\!\!\!\!E\Vert P_c\phi(t)\Vert_{\cH}\!\!&\le&\!\!\int_0^t
E\Vert P_cU(t-s)V\chi(s)\Vert_{L^2_{-5/2-\de}}~ds
\le C\int_0^t (1+t-s)^{-3/2}E\Vert V\chi(t)\Vert_{L^2_{5/2+\delta_1}}ds\\
\la{phi-est}
\!\!&\le&\!\! C_1\int_0^t (1+t-s)^{-3/2}
E\Vert \chi(t)\Vert_{L^2_{5/2+\delta_1-\rho}}ds
\le C_2,\quad t>0
\eeqn
since $5/2+\delta_1-\rho<-5/2-\delta$.
Now  (\ref{7.4})-- (\ref{phi-est}) imply (\re{pest}).
\medskip\\
Now Proposition \re{p9.1} follows from (\re{pest}) by Prokhorov theorem
\ci[Lemma II.3.1]{VF} as in the proof of \ci[Theorem XII.5.2]{VF}.
\hfill$\Box$
 \medskip\\
{\bf Proof of Proposition \ref{p9.2}}
We have
$$
\int \exp(i\langle\psi,\phi\rangle)P_c^*\mu_t(d\psi)=
\int \exp(i\langle P_c\psi,\phi\rangle)\mu_t(d\psi)
=E\exp{i\langle P_cU(t)\psi_0,\phi\rangle}
$$
Bound (\ref{6.6}) and  Cauchy-Schwartz inequality imply
\beqn\nonumber
|E\exp{i\langle P_cU(t)\psi_0,\phi\rangle}-
E\exp{i\langle U_0(t)\psi_0,W\phi\rangle}|\!\!&=&\!\!
|E\exp{i\langle \psi_0,P_cU'(t)\phi\rangle}-
E\exp{i\langle\psi_0,U'_0(t)W\phi\rangle}|\\
\nonumber
\!\!&\le&\!\!E|\langle\psi_0,r(t)\phi\rangle|
\le(E\langle\psi_0,r(t)\phi\rangle^2)^{1/2}\to 0
\eeqn
as $t\to\infty$. It remains to prove that
\be\la{7.10}
E \exp{i\langle\psi_0,U'_0(t)W\phi\rangle}  \to
\exp\{-\fr{1}{2}{\cal  Q}_\infty( W\phi,W\phi)\},
~~t\to\infty.
\ee
The convergence does  not follow directly  from
Proposition \ref{p2} since
$ W\phi\not\in{\cal D}$.
We  can approximate $W\phi\in L^2$
by functions from ${\cal D}$
since ${\cal D}$ is dense in $L^2$.
Hence, for any $\ve>0$ there exists
$\phi_\ve\in{\cal D}$ such that
\be\la{7.12}
\Vert  W\phi-\phi_\ve \Vert\le\ve.
\ee
By the triangle
inequality
\beqn
&&|E \exp{i\langle\psi_0,U'_0(t)W\phi\rangle}-
\exp\{-\fr{1}{2}{\cal Q}_\infty (W\phi,W\phi)\}|\nonumber\\
&\le&
|E \exp{i\langle\psi_0,U'_0(t)W\phi\rangle}-
E\exp{i\langle\psi_0,U'_0(t)\phi_\ve\rangle}|\nonumber\\
&&
+E|\exp{i\langle U_0(t)\psi_0,\phi_\ve\rangle}-
\exp\{-\frac{1}{2} {\cal Q}_{\infty}(\phi_\ve,\phi_\ve)\}|
\nonumber\\
&&+|\exp\{-\frac{1}{2}{\cal Q}_{\infty}(\phi_\ve,\phi_\ve)\}-
\exp\{-\frac{1}{2}{\cal Q}_{\infty}(W\phi, W\phi)\}|.
\la{7.13}
\eeqn
Let us estimate each term in the RHS of (\ref{7.13}).
Theorem \ref{t6.1}
implies that
uniformly in $t>0$
\beqn\nonumber
E|\langle\psi_0,U'_0(t)(W\phi-\phi_\ve)\rangle|&\le&
(E|\langle\psi_0,U'_0(t)(W\phi-\phi_\ve)\rangle|^2)^{1/2}\le
\Vert q_0\Vert_{L^1}^{1/2}\Vert U'_0(t)(W\phi-\phi_\ve)\Vert\\
\nonumber
&\le& C\Vert W\phi-\phi_\ve \Vert\le C\ve.
\eeqn
Then the first term is ${\cal O}(\ve)$
 uniformly in $t>0$.
The second term
converges to zero as $t\to\infty$ by Proposition \ref{p2}
since  $\phi_\ve\in {\cal D}$.
At last, the third term
is ${\cal O}(\ve)$
by (\ref{7.12})  and  the continuity
of the quadratic form ${\cal Q}_\infty(\phi,\phi)$
in $L^2\otimes\C^4$. The continuity follows
from Corollary \ref{coro}.
Now  convergence (\ref{7.10}) follows
since $\ve>0$ is arbitrary.
\hfill$\Box$
\setcounter{equation}{0}
\section{Appendix: Decay of eigenfunctions}

Here we prove the spatial decay of eigenfunctions.
\begin{lemma}\la{ef-d}
Let $V$ satisfy {\bf E1}, and $\psi(x)\in L^2(\R^3)$ be an
eigenfunction of the Dirac operator corresponding to a eigenvalue
$\lam\in(-m,m)$, i.e.
$$
H\psi(x)=\lam\psi(x),\quad x\in\R^3.
$$
Then $\psi\in L^2_s$ for all $s\in\R$.
\end{lemma}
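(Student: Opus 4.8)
The plan is to convert the eigenvalue equation into a convolution identity with an exponentially decaying kernel and then bootstrap the weight one step at a time.

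First I would rewrite $H\psi=\lam\psi$ in the form $(H_0-\lam)\psi=-V\psi$, where $H_0=-i\al\cdot\nabla+\beta m$ is the free Dirac operator. Since the matrices $\al_k$ and $\beta$ satisfy the Clifford relations, a direct computation (the cross terms cancel) gives $H_0^2=-\De+m^2$, so the spectrum of the self-adjoint operator $H_0$ is $(-\infty,-m]\cup[m,\infty)$. As $\lam\in(-m,m)$ lies in the spectral gap, $H_0-\lam$ is boundedly invertible on $L^2$, and since $V$ is bounded the relation $H_0\psi=(\lam-V)\psi\in L^2$ shows $\psi$ lies in the domain of $H_0$. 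Inverting, we obtain
$$
\psi=-(H_0-\lam)^{-1}V\psi.
$$

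Next I would make the resolvent explicit. Using $H_0^2=-\De+m^2$ and the factorization $(H_0-\lam)(H_0+\lam)=H_0^2-\lam^2$,
$$
(H_0-\lam)^{-1}=(H_0+\lam)(-\De+\kappa^2)^{-1},\qquad \kappa:=\sqrt{m^2-\lam^2}>0.
$$
The convolution kernel of $(-\De+\kappa^2)^{-1}$ is $e^{-\kappa|x|}/(4\pi|x|)$; applying the first-order operator $H_0+\lam$ produces a matrix-valued kernel $K(x)$ carrying at most a $|x|^{-2}$ singularity at the origin (from differentiating $g$) and an $e^{-\kappa|x|}$ decay at infinity. Because $|x|^{-2}$ is locally integrable in three dimensions and the exponential beats every power, $\langle x\rangle^s K\in L^1(\R^3)$ for all $s\in\R$. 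Thus $\psi=-K*(V\psi)$, a convolution against an exponentially decaying kernel.

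The final step is the bootstrap. By Peetre's inequality $\langle x\rangle^s\le C\langle x-y\rangle^{|s|}\langle y\rangle^s$ combined with Young's inequality, convolution by $K$ is bounded on $L^2_\si$ for every $\si$, with norm controlled by $\Vert\langle\cdot\rangle^{|\si|}K\Vert_{L^1}$; moreover multiplication by $V$ maps $L^2_\si$ into $L^2_{\si+\rho}$ by {\bf E1} (as noted after Definition \re{space}). Starting from $\psi\in L^2=L^2_0$, the identity $\psi=-K*(V\psi)$ yields $V\psi\in L^2_\rho$ and hence $\psi\in L^2_\rho$; iterating gives $\psi\in L^2_{n\rho}$ for every $n$, so $\psi\in L^2_s$ for all $s\in\R$. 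The only point requiring care is the analysis of the kernel $K$: confirming both its exponential decay, which hinges on $\lam$ lying strictly inside the gap so that $\kappa>0$, and the local integrability of its $|x|^{-2}$ singularity; once the weighted $L^1$ bound on $K$ is in hand, the iteration is routine.
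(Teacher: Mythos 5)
Your proof is correct and follows essentially the same route as the paper: both write $\psi=-(H_0-\lam)^{-1}V\psi$, factor the free resolvent through the Klein--Gordon resolvent via $H_0^2=-\De+m^2$ with $|\lam|<m$ keeping the denominator away from zero, and bootstrap the weight by $\rho$ at each step using {\bf E1}. The only difference is that you justify boundedness of $R_0(\lam)$ on the weighted spaces $L^2_\si$ through an explicit exponentially decaying convolution kernel plus the Peetre--Young estimate, whereas the paper passes to the Fourier side and uses that the smooth bounded symbol preserves $H^s$; these are two renditions of the same fact.
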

\begin{proof}
Denote by $R_0(\lam)=(H_0-\lam)^{-1}$ the resolvent of the free
Dirac equation.
The equation  $(H_0+V-\lam)\psi=0$ implies
\be\la{pR}
\psi=R_0(\lam)f,\quad {\rm where}\quad f=-V\psi\in L^2_{2+\rho}
\ee
From the identity
$$
(-i\al\cdot\nabla+\beta m-\lam)(i\al\cdot\nabla-\beta m-\lam)
=\Delta-m^2+\lam^2
$$
it follows that
\be\la{RR}
R_0(\lam)=\fr{i\al\cdot\nabla-\beta m-\lam}{\Delta-m^2+\lam^2}
\ee
Hence, in the Fourier transform, the first equation of (\re{pR}) reads
$$
\hat\psi(k)=\frac{(-\al\cdot k+\beta m+\lam)\hat f(k)}
{k^2+m^2-\lam^2}
$$
Since $|\lam|<m$, we have
$$
\Vert\psi\Vert_{L^2_{2+\rho}}=C\Vert\hat\psi\Vert_{H^{2+\rho}}
\le C_1\Vert\hat f\Vert_{H^{2+\rho}}=
C_2\Vert f\Vert_{L^2_{2+\rho}}
\le C_3\Vert\psi\Vert_{L^2_{2}}
$$
Hence, $\psi\in L^2_s$ with any $s\in\R$ by induction.
\end{proof}


\begin{thebibliography}{99}


\bibitem{Bl}  P.M. Bleher,
On operators depending meromorphically on a parameter,
{\it Moscow Univ. Math. Bull.} {\bf 24} (1969), 21-26.

\bibitem{Bo}  N. Boussaid, Stable directions for small nonlinear Dirac
standing waves, {\em Comm. Math. Phys.} {\bf 268} (2006), no. 3, 757-817.

\bibitem{16} P. Billingsley, Convergence of probability measures,
John Wiley, New York, London, Sydney, Toronto, 1968.

\bibitem{DS} R.L. Dobrushin, Yu.M. Suhov,
 On the problem of the mathematical
 foundation of the Gibbs postulate in classical statistical mechanics,
 p. 325-340 in: Mathematical Problems in Theoretical Physics,
 Lecture Notes in Physics, v. 80, Springer, Berlin, 1978.

\bibitem{DKKS}
T. Dudnikova, A. Komech, E. Kopylova, Yu. Suhov,
On convergence to equilibrium distribution, I.
The Klein-Gordon
equation with mixing, {\em Comm. Math. Phys.}
{\bf 225} (2002), no.1, 1-32.

\bibitem{DKRS}
T. Dudnikova, A. Komech, N. Ratanov, Yu. Suhov,
On convergence to  equilibrium
distribution, II. The  wave  equation in odd dimensions,
with mixing,  {\em J. Stat. Phys.} {\bf 108} (2002), no.4,
1219-1253.

\bibitem{DKS03}
T.Dudnikova, A. Komech, H.Spohn,
On convergence to statistical equilibrium for harmonic crystals,
Journal of Mathematical Physics 44 (2003), no. 6, 2596-2620.

\bibitem{DKM}
T. Dudnikova, A. Komech, N. Mauser,
On the convergence to a statistical equilibrium for the Dirac
equation, {\em Russian J. of Math. Phys.} {\bf 10} (2003), no. 4,
399-410.

\bibitem{IL} I.A. Ibragimov, Yu.V. Linnik,
Independent and stationary sequences of random variables,
Ed. by J. F. C. Kingman,  Wolters-Noordhoff, Groningen, 1971.

\bibitem{KKM1}
A. Komech, E. Kopylova, N. Mauser,
On convergence to equilibrium distribution for wave equation in even dimensions,
{\em Ergod. Th. and Dynam. Sys.} {\bf 24} (2004), 547-576.

\bibitem{KKM} A. Komech, E. Kopylova, N. Mauser,
On convergence to  equilibrium distribution
for  Schr\"odinger equation,
{\em Markov Processes and Related Fields}
{\bf 11} (2005), no. 1, 81-110.

\bibitem{JP}
V. Jaksi\'c, C.-A. Pillet, Ergodic properties of classical
dissipative systems, {\em Acta Math.} {\bf 181} (1998), no.2,
245-282.

\bibitem{Ros} M.A. Rosenblatt, A central limit theorem
and a strong mixing condition, {\em Proc. Nat. Acad. Sci. U.S.A.}
{\bf 42} (1956), no.1,  43-47.

\bibitem{RS3} M. Reed, B. Simon, Methods of modern mathematical physics
III: Scattering theory, Academic Press, New York (1979).

\bibitem{VF} M.I. Vishik, A.V. Fursikov, Mathematical problems of
statistical hy\-dro\-mechanics, Kluwer Academic
Publishers, Dordrecht, 1988.

\end{thebibliography}
\end{document}